\newtheorem{lemma}{Lemma}
\newtheorem{remark}{Remark}
\newcounter{MYtempeqncnt}
\newcommand{\tabincell}[2]{\begin{tabular}{@{}#1@{}}#2\end{tabular}}
\DeclareRobustCommand{\cev}[1]{%
	{\mathpalette\do@cev{#1}}%
}
\newcommand{\do@cev}[2]{%
	\vbox{\offinterlineskip
		\sbox\z@{$\m@th#1 x$}%
		\ialign{##\cr
			\hidewidth\reflectbox{$\m@th#1\vec{}\mkern4mu$}\hidewidth\cr
			\noalign{\kern-\ht\z@}
			$\m@th#1#2$\cr
		}%
	}%
}
\begin{document}
	\title{Channel Estimation for XL-MIMO Systems with Decentralized Baseband Processing: Integrating Local Reconstruction with Global Refinement}
	\vspace{-2em}
	\author{
		\IEEEauthorblockN{ 
			Anzheng Tang\IEEEauthorrefmark{1},~\IEEEmembership{Graduate Student Member,~IEEE}, Jun-Bo Wang\IEEEauthorrefmark{1},~\IEEEmembership{Senior Member,~IEEE},\\
			Yijin Pan\IEEEauthorrefmark{1},~\IEEEmembership{Member,~IEEE},
			Cheng Zeng\IEEEauthorrefmark{2},~\IEEEmembership{Graduate Student Member,~IEEE},
			Yijian Chen\IEEEauthorrefmark{3},\\
			Hongkang Yu\IEEEauthorrefmark{3},
			Ming Xiao\IEEEauthorrefmark{4},~\IEEEmembership{Senior Member,~IEEE},
			Rodrigo C. de Lamare\IEEEauthorrefmark{5},~\IEEEmembership{Senior Member,~IEEE},\\
			and Jiangzhou Wang\IEEEauthorrefmark{1},~\IEEEmembership{Fellow,~IEEE}
		}
		\thanks{This work was supported in part by the National Natural Science Foundation of China under Grant No. 62371123, 62350710796 and 62331024, in part by ZTE Industry-University-Institute Cooperation Funds under Grant No. IA20240723011-PO0006, and in part by Research Fund of National Mobile Communications Research Laboratory Southeast University under Grant No. 2025A03. (\emph{Corresponding authors: Jun-Bo Wang and Yijin Pan})}
		\thanks{A. Tang, J.-B. Wang, Y. Pan, and J. Wang are with the National Mobile Communications Research Laboratory, Southeast University, Nanjing 210096, China. (E-mail: \{anzhengt, jbwang, panyj, j.z.wang\}@seu.edu.cn)}
		\thanks{C. Zeng is with the School of Electronic and Optical Engineering, Nanjing University of Science and Technology, Nanjing 210094, China, and he is also with the Chinese University of Hong Kong, Shenzhen, Guangdong 518172, China. (email: {czeng}@njust.edu.cn)} 
		\thanks{Y. Chen and H. Yu are with the Wireless Product Research and Development Institute, ZTE Corporation, Shenzhen 518057, China. (E-mail:\{yu.hongkang, chen.yijian\}@zte.com.cn)}
		\thanks{Ming Xiao is with the Department of Information Science and Engineering, KTH Royal Institute of Technology, 10044 Stockholm, Sweden. (E-mail: mingx@kth.se)}
		\thanks{Rodrigo C. de Lamare is with the Centre for Telecommunications Studies, Pontifical Catholic University of Rio de Janeiro, Rio de Janeiro 22451-900, Brazil, and also with the School of Physics, Engineering and Technology, University of York, York YO10 5DD, U.K. (E-mail: delamare@puc-rio.br).}
	}
	\maketitle
	\begin{abstract}
		In this paper, we investigate the channel estimation problem for extremely large-scale multiple-input
		multiple-output (XL-MIMO) systems with a hybrid analog-digital architecture, implemented within a decentralized baseband processing (DBP) framework with a star topology. Existing centralized and fully decentralized channel estimation methods face limitations due to excessive computational complexity or degraded performance. To overcome these challenges, we propose a novel two-stage channel estimation scheme that integrates local sparse reconstruction with global fusion and refinement.
		Specifically, in the first stage, by exploiting the sparsity of channels in the angular-delay domain, the local reconstruction task is formulated as a sparse signal recovery problem. To solve it, we develop a graph neural networks-enhanced sparse Bayesian learning (SBL-GNNs) algorithm, which effectively captures dependencies among channel coefficients, significantly improving estimation accuracy.
		In the second stage, the local estimates from the local processing units (LPUs) are aligned into a global angular domain for fusion at the central processing unit (CPU). 
		Based on the aggregated observations, the channel refinement is modeled as a Bayesian denoising problem. 
		To efficiently solve it, we devise a variational message passing algorithm that incorporates a Markov chain-based hierarchical sparse prior, effectively leveraging both the sparsity and the correlations of the channels in the global angular-delay domain.
		Simulation results show the effectiveness and superiority of the proposed SBL-GNNs algorithm over existing methods, demonstrating improved estimation performance and reduced computational complexity.
	\end{abstract}
	\begin{IEEEkeywords}
		XL-MIMO systems, decentralized baseband processing, sparse reconstruction, Bayesian inference, graph neural networks.    
	\end{IEEEkeywords}
	\IEEEpeerreviewmaketitle
	\section{Introduction}
	\label{section1}
	To meet the increasing demand for data transmission, millimeter-wave (mmWave) bands have become essential to alleviate spectrum congestion in traditional microwave frequencies. However, high-frequency signals suffer from significant attenuation and are highly susceptible to blockages. To overcome these challenges, massive multiple-input
	multiple-output (MIMO) systems with hybrid beamforming architectures have been widely adopted to mitigate considerable path loss.
	Recently, the concept of extremely large-scale MIMO (XL-MIMO) has emerged as an evolution of the massive MIMO paradigm, attracting substantial attention from both academia and industry \cite{Tutorial}. XL-MIMO systems are anticipated to achieve higher spectrum and energy efficiencies, expand network coverage, and enhance positioning accuracy by leveraging substantial spatial multiplexing and beamforming gains \cite{Anzheng1, An_1, lecheng}. To fully realize these benefits, advanced signal processing techniques such as beamforming and equalization must be meticulously designed. Notably, obtaining accurate channel state information (CSI) is essential for realizing these performance gains, underscoring the critical role of precise channel knowledge in maximizing the potential of XL-MIMO technologies \cite{An_2, Yang_1}.
	\vspace{-1em}
	\subsection{Existing Research}
	Recent research has extensively explored channel estimation algorithms tailored to various characteristics of XL-MIMO channels. For example, \cite{Centralized_CE_3} addressed spherical wavefront effects by designing an angular-and-distance-dependent near-field codebook and proposed a polar-domain sparsity-based simultaneous orthogonal matching pursuit (SOMP) algorithm. This approach was further extended to hybrid near- and far-field scenarios by \cite{PolarCS2, PolarCS3}. To further mitigate the size of the polar-domain codebook, \cite{Centralized_CE_4} introduced a distance-parameterized angular-domain representation, facilitating channel estimation through joint dictionary learning and sparse recovery methods.
	
	Considering the spatially non-stationary (SnS) property of XL-MIMO channels, \cite{Centralized_CE_1} transformed SnS channels into a series of spatially stationary channels using a group time block code (GTBC)-based signal extraction scheme. Leveraging the antenna-domain sparsity of SnS channels, \cite{SnS_1} characterized SnS channels with a subarray-wise Bernoulli-Gaussian prior and proposed a Bayesian inference-based scheme for joint user activity and channel estimation. Furthermore, to fully exploit the sparsity of SnS channels in the antenna-delay domain, \cite{SnS_2} introduced a structured prior model based on a hidden Markov model (HMM) and proposed the turbo orthogonal approximate message passing (Turbo-OAMP) algorithm for low-complexity channel estimation.
	Furthermore, taking into account both antenna-domain-and-angular domain characteristics, \cite{Anzheng2} introduced a two-stage approach for detecting visibility regions (VRs) and estimating channels. However, the two stages are performed separately, leading to insufficient utilization of angular-domain information during the VRs detection phase.
	To overcome the limitation, we further introduced a cascaded generalized approximate message passing (TL-GAMP) algorithm to jointly estimate VRs and channels in our subsequent work \cite{Centralized_CE_6}. 
	
	Taking into account the near-field dual-wideband effect for wideband XL-MIMO systems, \cite{Dual_wideband_1} proposed to model the angular-delay domain channels with an independent and non-identically
	distributed Bernoulli-Gaussian prior, and the channel estimation problem is solved by the constrained Bethe free energy minimization framework. To further exploit the structure sparsity of angular-delay domain channels, \cite{tang2024spatial} introduced a 4-connected Markov random filed (MRF) based sparse prior model and proposed a unity approximate message passing to achieve the low-complexity estimation.
	
	However, the aforementioned algorithms are implemented centrally. 
	{As the number of antennas increases, such centralized architecture encounters bottlenecks in terms of high interconnection cost and computational complexity. 
	To address these challenges, decentralized baseband processing (DBP) has emerged as a promising technique \cite{De_MIMO_15, De_MIMO_8, De_MIMO_5}}. 
	In the DBP architecture, the antennas are divided into several antenna clusters, and each cluster is processed by an independent and more affordable local processing unit (LPU). 
	{Notably, there are significant distinctions between the DBP architecture and cell-free systems. Specifically, in contrast to cell-free systems, where the access points (APs) are geographically distributed and potentially lack a centralized base station \cite{Yang_2, De_MIMO_1, cell_free1}, the DBP architecture retains a centralized antenna array.
	Furthermore, in the cell-free context, the APs have independent power constraints because they are geographically separated devices, each provisioned with its own power line. 
	In contrast, the DBP network is a baseband signal processing layer architecture mounted on XL-MIMO antennas, where the segments share a common power supply\cite{De_MIMO_4}.}
	
	Under the DBP architecture, it is straightforward for each LPU to perform channel estimation based solely on its locally received signals. 
	However, this fully decentralized strategy would experience significant performance degradation due to the neglect of inter-LPU correlations. Consequently, advanced signal processing techniques are required to achieve a promising estimation performance while maintaining low computational complexity.
	Recently, various signal processing algorithms for XL-MIMO systems with DBP architecture have been extensively investigated \cite{De_MIMO_7, De_MIMO_8, De_MIMO_5, De_MIMO_9, De_MIMO_10, De_MIMO_11, De_MIMO_6, De_MIMO_2}. However, relatively few works have focused on decentralized channel estimation problems \cite{De_MIMO_2, De_MIMO_6}. Specifically, \cite{De_MIMO_2} explored decentralized channel estimation with a daisy-chain network topology and proposed a distributed minimum mean square error (MMSE) algorithm leveraging spatial correlation among antenna elements. To further reduce fronthaul communication and computational costs, \cite{De_MIMO_6} proposed two low-complexity diagonal MMSE channel estimators, namely the aggregate-then-estimate (AGE) and the estimate-then-aggregate (EAG) algorithms. Unluckily, there methods are derived under the full digital precoding architecture, which is inapplicable for the more practical hybrid precoding architecture. 
	\vspace{-1em}
	\subsection{Motivations and Contributions}
	Addressing the gap in the existing literature, this paper investigates the channel estimation problem for XL-MIMO systems with a hybrid analog-digital architecture implemented under a DBP framework with a star topology. Specifically, the key contributions of this work are summarized as follows:
	\begin{itemize}
		\item To overcome the limitations of existing centralized and fully decentralized channel estimation schemes, we propose a two-stage estimation scheme comprising local sparse reconstruction and global fusion with refinement.
		In the first stage, the channel estimation for each subarray is derived as a sparse signal recovery problem by leveraging angular-delay sparsity within a Bayesian inference framework. Then, the local estimates are sent to the CPU via the fronthaul link.
		In the second stage, the local estimates are first aligned into a unified global angular domain with enhanced resolution to exploit the correlations among subarray channels. Then, the channel refinement task is formulated as a Bayesian denoising problem based on the aggregated observations.
		\item To efficiently solve the sparse signal recovery problem in LPUs, we propose a graph neural networks-based sparse Bayesian learning (SBL-GNNs) algorithm. Specifically, the expectation step (E-step) of the traditional SBL algorithm is preserved, while the non-linear functions for the prior variance update in the maximization step (M-step) are replaced by GNNs. {Based on the observations from the E-step, the prior variance update is formulated as a maximum a posteriori (MAP) estimation problem, modeled using a pairwise MRF representation. Leveraging the properties of this pairwise MRF, a GNN architecture is designed to efficiently capture dependencies among channel coefficients and perform MAP inference}, enhancing both the accuracy and robustness of channel estimation.
		\item To address the Bayesian denoising problem of aggregated channels, we propose a Markov chain-based hierarchical prior model that effectively captures both the sparsity and correlations among channel coefficients in the angular-delay domain. In this prior model, the variance state of the channel coefficients is governed by a Markov chain, which encourages structured block patterns and suppresses isolated coefficients that deviate from their neighboring values. Additionally, to efficiently perform Bayesian inference, {we devise a variational message passing algorithm tailored to the prior model, capable of handling the dependencies and loops in the factor graph}.
		
		\item Extensive numerical results are presented to validate the efficacy of the proposed algorithms in terms of computational complexity and estimation performance. Specifically, for local reconstruction, the proposed SBL-GNNs algorithm demonstrates the lowest runtime while achieving the best reconstruction performance. Moreover, the overall computational complexities of the proposed two-stage scheme are significantly lower than those of the centralized scheme, while maintaining performance comparable to that of the centralized approach.
	\end{itemize}
	
	Organization: The rest of the paper is organized as follows. In Section \ref{section2}, we introduce the system model and problem description. Section \ref{section3} describes the proposed two-stage estimation scheme. In Section \ref{section4}, we detail the proposed SBL-GNNs algorithm. Section \ref{section5} introduces the proposed centralized refinement algorithm for the CPU. Finally, simulation results are presented in Section \ref{section6}, and conclusions are drawn in Section \ref{section7}.
	
	Notations: Lower-case letters, bold-face lower-case letters, and bold-face upper-case letters are used for scalars, vectors and matrices, respectively; 
	The superscripts $\left(\cdot \right)^{\mathrm{T}}$ and $\left(\cdot \right)^{\mathrm{H}}$ stand for transpose and conjugate transpose, respectively;  
	$\mathrm{blkdiag}\left(\mathbf{X}_1, \mathbf{X}_2, \dotso, \mathbf{X}_N\right)$ denotes a diagonal matrix with $\left\{\mathbf{X}_1, \mathbf{X}_2, \dotso, \mathbf{X}_N\right\}$ being its diagonal elements; 
	{$\otimes$ denotes the Kronecker product; $\mathrm{vec}(\cdot)$ denotes the vectorization operations, which reorganizes the matrix into a column vector.}
	$\mathbb{C}^{M \times N}$ denotes a $M \times N$ complex matrix. 
	In addition, a random variable $x \in \mathbb{C}$ drawn from the complex Gaussian distribution with mean $m$ and variance $v$ is characterized by the probability density function $\mathcal{CN}(x;m,v) = {\exp\left\lbrace-\{|x-m\right|^2}/{v}\}/{\pi v} $; a random variable $\gamma \in \mathbb{R}$ from Gamma distribution with mean $a/b$ and variance $a/b^2$ is characterized by $\mathcal{G}a(\gamma; a, b) \propto \gamma^{a-1}\exp(-\gamma b)$.
	\section{System Models and Problem Description}
	\label{section2}
	\begin{figure}
		\centering
		\includegraphics[width=0.45\textwidth]{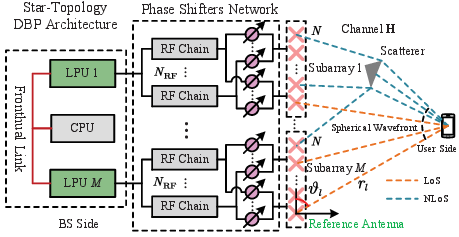}
		\caption{XL-MIMO systems with star-topology DBP architecture.}
		\label{system_model}
	\end{figure}
	{Consider a mmWave XL-MIMO system in which the base station (BS) equipped with an $N_t$-antenna extremely large-scale array (ELAA), arranged in a uniform linear configuration, serves $U$ single-antenna users. Additionally, to enable efficient implementation, the BS adopts a hybrid analog-digital architecture based on a DBP framework with a star topology \cite{De_MIMO_2,DBP_Compressive}, as shown in Fig. \ref{system_model}.}   
	In this configuration, the ELAA is divided into $M$ non-overlapping subarrays, each consisting of $N = N_{{t}} / M$ antennas. 
	{Without loss of generality, we assume the antenna spacing is $d = \lambda_c/2$, where $\lambda_c = c/f_c$ with $c$ and $f_c$ indicating the speed of light and central carrier frequency, respectively.}
	These subarrays are connected to LPUs through a phase shifter network, with each subarray associated with $N_{\mathrm{RF}}$ radio frequency (RF) chains.
	Then, the LPUs coordinate with a CPU through fronthaul links for baseband signal processing\footnote{Due to the wired transmission of the fronthaul link, we assume that the different LPUs can achieve perfect synchronization.}.
	 
	Furthermore, to mitigate the effects of frequency-selective fading, orthogonal frequency-division multiplexing (OFDM) is employed. Without loss of generality, assume that $K$ subcarriers are uniformly selected from the available subcarriers to transmit pilot symbols. {For notation conciseness later, we denote the following sets $\mathcal{M}=\{1, 2, \cdots, M\}$, $\mathcal{N}=\{1, 2, \cdots, N\}$, $\mathcal{J}=\{1, 2, \cdots, N_t\}$, $\mathcal{K}=\{1, 2, \cdots, K\}$, and {$\mathcal{V} = \{1,2,\cdots, NK\}$}.}
	\vspace{-1em}
	\subsection{Channel Model}
	In mmWave XL-MIMO systems, the deployment of ELAAs and the utilization of high-frequency bands enable near-field communications over distances of up to hundreds of meters. 
	This renders the far-field plane wavefront assumption invalid and necessitates accounting for spherical wavefront effects. 
	Additionally, dual-wideband effects are inherent in wideband XL-MIMO systems, further complicating channel modeling.
	To accurately capture these characteristics, we adopt a uniform spherical-wavefront-based channel model \cite{tang2024spatial}. Specifically, considering $G$ propagation paths between the BS and the user $u$, thus, the overall spatial-frequency channel of user $u$, denoted as $\mathbf{H}^{(u)} \in \mathbb{C}^{N_t \times K}$, is expressed as
	\begin{equation}
		\mathbf{H}^{(u)} = \sum_{g=1}^{G}\alpha_g^{(u)}
		\left(\mathbf{b}(\psi_g^{(u)}, \varphi_g^{(u)}) \mathbf{a}^{\mathrm{T}}(\tau_g^{(u)})\right)\odot\boldsymbol{\Theta}(\psi_g^{(u)}, \varphi_g^{(u)}),
		\label{H_matrix}
	\end{equation}
	where $\alpha_g^{(u)} \in \mathbb{C}$ is the equivalent complex gain of the $g$-th path for user $u$; $\mathbf{a}(\tau_g^{(u)}) \in \mathbb{C}^{K\times 1}$ and $\mathbf{b}(\psi_g^{(u)}, \varphi_g^{(u)}) \in \mathbb{C}^{N_t\times 1}$ denote frequency-domain and spatial-domain steering vectors corresponding to the $g$-th path, respectively, which are respectively given by
 	\begin{align}
 		[\mathbf{a}(\tau_g^{(u)})]_k &= \mathrm{e}^{-\mathrm{j}2\pi {f_{k}}\tau_{g}^{(u)}}, k \in  \mathcal{K}\\
 		\label{b_array}[\mathbf{b}(\psi_g^{(u)},  \varphi_g^{(u)})]_j &= \mathrm{e}^{-\mathrm{j}2\pi \left({\psi_g^{(u)}}(j-1) - {\varphi_g^{(u)}}(j-1)^2 \right)}, j\in \mathcal{J}
 	\end{align}
 	where $f_k = f_c+f_s(k-1)$ indicates the $k$-th subcarrier frequency with $f_s$ being the system bandwidth;
 	$\tau_g^{(u)} = r_g^{(u)}/c$ denotes the path delay with $r_g^{(u)}$ indicating the distance between the reference antenna and user or scatters. 
 	In addition, $\psi_g^{(u)} \triangleq d\cos \vartheta_g^{(u)} / \lambda_c$ and $\varphi_g^{(u)} \triangleq {d^2\sin^2\vartheta_g^{(u)}}/{2r_g^{(u)}\lambda_c}$ with $\vartheta_g^{(u)}$ indicating the direction angle of the $g$-th path to the reference antenna.
 	{From (\ref{b_array}), the array steering vectors $\mathbf{b}(\psi_g, \varphi_g)$ are distance-dependent and incorporate quadratic terms with respect to the antenna index $n$, accounting for variations in path delay and direction angle across antenna elements, thereby reflecting the impact of near-field propagation.}
 	$\boldsymbol{\Theta}(\psi_g^{(u)}, \varphi_g^{(u)})$ denotes the frequency-dependent phase shifts matrix, {whose $(j,k)$-th element, $j\in\mathcal{J}$ and $k\in\mathcal{K}$},  is given by 
 	\begin{equation}
 		\left[\boldsymbol{\Theta}(\psi_g^{(u)}, \varphi_g^{(u)})\right]_{j,k} = \mathrm{e}^{-\mathrm{j}2\pi {f_k}\left((j-1)\frac{\psi_g^{(u)}}{f_c}- (j-1)^2\frac{\varphi_g^{(u)}}{f_c}\right)}.
 	\end{equation}
 	{Consequently, the channel of user $u$ associated with subarray $m$ is represented by $\mathbf{H}_m^{(u)} = [\mathbf{h}_{m,1}^{(u)},\mathbf{h}_{m,2}^{(u)},\cdots, \mathbf{h}_{m,K}^{(u)}]\in \mathbb{C}^{N\times K}$, which consists of the rows of $\mathbf{H}^{(u)}$ from the $((m-1)N+1)$-th to the $mN$-th row with $m\in \mathcal{M}$.}
 	\vspace{-1em}
	\subsection{Problem Description}
	This paper mainly addresses the uplink channel estimation task. {Without loss of generality, we assume that the mutually orthogonal pilot sequences are used by different users. As a result, the channel estimation for each user is independent of the others. In this context, we focus on the channel estimation problem for an arbitrary user $u$. For notation simplification, we omit the superscript $u$ in the following.
	At the BS side, each subarray combines the received pilot signals using $N_{\mathrm{RF}}$ RF chains, each associated with a distinct beam.}
	Assume the total number of training beams as $Q$. As a result, we needs $P=Q/N_{\mathrm{RF}}$ time slots to traverse all receive beams given a fixed pilot symbol.
	Denote $\iota_{k,p}$ as the pilot symbol of the $k$-th subcarrier in the $p$-th time slot. 
	Then, the received signal $\mathbf{y}_{m,k,p} \in \mathbb{C}^{N_{\mathrm{RF}}\times 1}$ of subarray $m$ at the $k$-th subcarrier in the $p$-th time slot is given by  
	\begin{equation}
		\mathbf{y}_{m, k,p} = \mathbf{W}_{m, p}\mathbf{h}_{m,k} \iota_{k,p} + \mathbf{W}_{m, p}\mathbf{n}_{m, k, p},
	\end{equation}
	where $\mathbf{W}_{m,p} \in \mathbb{C}^{N_{\mathrm{RF}}\times N}$ denotes the combining matrix for subarray $m$ in the $p$-th time slot, $m \in \mathcal{M}$ and $p = 1,2,\cdots,P$, with each element randomly selected from $\{-1/\sqrt{N}, 1/\sqrt{N}\}$;
	{$\mathbf{h}_{m,k} \in \mathbb{C}^{N\times 1}$ and $\mathbf{n}_{m, k, p} \in \mathbb{C}^{N\times 1}$ indicate the channel and noise vectors, respectively, for subarray $m$ at the $k$-th subcarrier with $m\in\mathcal{M}$ and $k\in\mathcal{K}$. Additionally, $\mathbf{n}_{m, k, p}$ is the additional white Gaussian (AWGN) noise vector, where each element obeys circularly symmetric complex Gaussian distribution with variance $\zeta_m^{-1}$ with $\zeta_m$ indicating the variance precision parameters.}
	Without loss the generality, we assume $\iota_{k,p} = 1$ for all $k$ and $p$. 
	By collecting the received pilot signals corresponding to all $P$ time slots, the total received signal $\mathbf{y}_{m,k} = [\mathbf{y}_{m,k,1}^{\mathrm{T}}, \cdots, \mathbf{y}_{m,k,P}^{\mathrm{T}}]^{\mathrm{T}} \in \mathbb{C}^{Q\times 1}$ for subarray $m$ at the $k$-th subcarrier can be expressed as
	\begin{equation}
		\mathbf{y}_{m,k} = \mathbf{W}_{m} \mathbf{h}_{m,k} + {\mathbf{n}}_{m,k},
	\end{equation}
	where $\mathbf{W}_m= \left[\mathbf{W}_{m,1}^{\mathrm{T}}, \mathbf{W}_{m,2}^{\mathrm{T}}, \cdots, \mathbf{W}_{m,P}^{\mathrm{T}}\right]^{\mathrm{T}} \in \mathbb{C}^{Q\times N}$ is the collective combining matrix. 
	{${\mathbf{n}}_{m,k} = [\mathbf{n}_{m, k,1}^{\mathrm{T}}\mathbf{W}_{m,1}^{\mathrm{T}}, \mathbf{n}_{m, k,2}^{\mathrm{T}}\mathbf{W}_{m,2}^{\mathrm{T}}, \cdots, \mathbf{n}_{m, k,P}^{\mathrm{T}}\mathbf{W}_{m,P}^{\mathrm{T}}]^{\mathrm{T}} \in \mathbb{C}^{Q\times 1}$ is the effective noise vector with the covariance matrix $\mathrm{blkdiag}(\zeta_m^{-1}\mathbf{W}_{m,1}\mathbf{W}_{m,1}^{\mathrm{H}},\cdots,\zeta_m^{-1}\mathbf{W}_{m,P}\mathbf{W}_{m,P}^{\mathrm{H}})$. 
	It follows that the incorporation of the analog combiner matrix $\mathbf{W}_{m,p}$ can lead to a non-identity covariance matrix $\mathbf{R}$, indicating that $\mathbf{n}_{m,k}$ may exhibit colored characteristics.
	Notably, if $\mathbf{W}_{m,p}$ is a semi-unitary matrix for any $m$ and $p$, i.e., $\mathbf{W}_{m,p} \mathbf{W}_{m,p}^{\mathrm{H}} = \mathbf{I}_{N_{\mathrm{RF}}}$, the covariance matrix $\mathbf{R}$ simplifies to $\zeta_m^{-1} \mathbf{I}_{Q}$. 
	In this case, $\mathbf{n}_{m,k}$ remains strictly white noise and follows $\mathbf{n}_{m,k} \sim \mathcal{CN}(\mathbf{0}, \zeta_m^{-1} \mathbf{I}_{Q})$. Consequently, for simplicity, this manuscript assumes $\mathbf{W}_{m,p}$ to be semi-unitary for all $m$ and $p$. In practical implementations, $\mathbf{W}_{m,p}$ can be chosen as a partial DFT or Hadamard matrix.\footnote{Notably, this assumption does not affect the subsequent algorithm design. The proposed estimation scheme is also applicable for a non-orthogonal analog combiner matrix, requiring only a pre-whitening to (\ref{De_obs}) \cite{Centralized_CE_6}.}
	Further, by concentrating the received signals from
	different subcarriers, the collective pilot signal $\mathbf{Y}_m = \left[\mathbf{y}_{m,1}, \mathbf{y}_{m,2}, \cdots,
	\mathbf{y}_{m,K}\right]\in \mathbb{C}^{Q\times K}$ at the $k$-th subcarrier is given by
	\begin{equation}
		\mathbf{Y}_m = \mathbf{W}_m\mathbf{H}_m + {\mathbf{N}}_m,
		\label{Y_pilot_m}
	\end{equation}
	{where $\mathbf{H}_m=\left[\mathbf{h}_{m,1},\mathbf{h}_{m,2}, \cdots, \mathbf{h}_{m,K}\right] \in \mathbb{C}^{N\times K}$ and $\mathbf{N}_m = \left[{\mathbf{n}}_{m,1},{\mathbf{n}}_{m,2},\cdots,{\mathbf{n}}_{m,K}\right] \in \mathbb{C}^{Q\times K}$ represent the collective channel matrix and noise matrix for subarray $m$, respectively. In addition, each column of ${\mathbf{N}}_m$ is a white noise vector with identical covariance matrix $\zeta_m^{-1}\mathbf{I}_{Q}$}. 
	
	Under the DBP architecture, each LPU can directly send its received signal $\mathbf{Y}_m$ to a CPU for the centralized estimation. The received signal at the CPU is given by
	\begin{equation}
		\mathbf{Y} = \mathbf{W}\mathbf{H} + {\mathbf{N}},
		\label{Y_pilot}
	\end{equation}
	where $\mathbf{W} = \mathrm{blkdiag}\left(\mathbf{W}_1, \mathbf{W}_2, \cdots, \mathbf{W}_{M}\right) \in \mathbb{C}^{{QM\times N_{{t}}}}$
	and $\mathbf{N} = \left[\mathbf{N}_{1}^{\mathrm{T}},\mathbf{N}_{2}^{\mathrm{T}}, \cdots, \mathbf{N}_{M}^{\mathrm{T}} \right]^{\mathrm{T}} \in \mathbb{C}^{QM\times K}$ denote the overall received matrix and noise matrix, respectively. Utilizing the sparsity of $\mathbf{H}$ in the angular-delay domain \cite{tang2024spatial}, the channel estimation task can be formulated as sparse signal recovery problem, i.e., 
	\begin{equation}
		\arg \min_{\mathbf{H}_\mathrm{T}} = \frac{1}{2}\left\|\mathbf{Y} - \mathbf{W} \mathbf{F}_{\mathrm{A}}^{\mathrm{H}}\mathbf{H}_{\mathrm{T}}\mathbf{F}_{\mathrm{D}}\right\|^2_2 + \varrho \left\|\mathbf{H}_\mathrm{T} \right\|_0, \,
		\label{L0_norm}
	\end{equation}
	where $\varrho > 0$ denotes a regularization factor, $\mathbf{H}_{\mathrm{T}}$ represents the angular-delay channel, {and $\mathbf{F}_{\mathrm{A}} \in \mathbb{C}^{N_{{t}} \times N_{{t}}}$ and $\mathbf{F}_{\mathrm{D}} \in \mathbb{C}^{K \times K}$ are the discrete Fourier transform (DFT) matrices of dimensions $N_{{t}}$ and $K$, respectively.
	Up to now, several algorithms have been developed to address the problem in (\ref{L0_norm}), including the SOMP method \cite{OMP_1} and Bayesian inference-based approaches \cite{PC_SBL,Variance_State_1,Variance_State_2}}. However, with the increasing number of antennas and subcarriers in mmWave XL-MIMO systems, these centralized algorithms become computationally prohibitive \cite{De_MIMO_4}. {For example, the SBL algorithm incurs a computational complexity of $\mathcal{O}(M^3Q^3K^3)$ due to the need for matrix inversions, making it unsuitable for large-scale scenarios.}
	
	To address these challenges, it is essential to design more practical estimation schemes with reduced computational complexity. A natural approach under the DBP architecture is for each subarray to independently estimate its local channel using its received signal. However, such a fully decentralized scheme ignores the inherent channel correlations among subarrays, potentially leading to estimation performance degradation. Consequently, there is an urgent need to develop efficient schemes that strike an attractive balance between computational efficiency and estimation accuracy.
	\section{Proposed Two-Stage Estimation Scheme}
	\label{section3}
	In this section, we propose a two-stage reconstruction-then-refinement estimation scheme, as illustrated in Fig.~\ref{Scheme}.
	\subsection{Stage I: Local Sparse Channel Reconstruction} 
	In the first stage, the goal is to reconstruct the channels for each subarray by exploiting the sparsity of $\mathbf{H}_m$ in the local angular-delay domain. 
	Denote $\mathbf{F}_{\mathrm{L}} \in \mathbb{C}^{N \times N}$ as an $N$-dimension DFT matrix, $\mathbf{H}_m$ can be approximated as
	\begin{equation}
		\mathbf{H}_m = \mathbf{F}_{\mathrm{L}}^{\mathrm{H}}\mathbf{H}_{\mathrm{T},m}^{\mathrm{L}}\mathbf{F}_{\mathrm{D}}, m \in \mathcal{M},
		\label{H_ma}
	\end{equation}
	where $\mathbf{H}_{\mathrm{T},m}^{\mathrm{L}} \in \mathbb{C}^{N\times K}$ is the local angular-delay channel of subarray $m$. Due to the limited propagation paths at mmWave frequencies, only a small number of angular and delay components in $\mathbf{H}_{\mathrm{T},m}^{\mathrm{L}}$ are significant, with these components concentrated within specific angular and delay ranges \cite{tang2024spatial}. As a result, $\mathbf{H}_{\mathrm{T},m}^{\mathrm{L}}$ exhibits block sparsity. 
	Utilizing (\ref{H_ma}), (\ref{Y_pilot_m}) can be rewritten as
	\begin{equation}
		\mathbf{y}_m = \boldsymbol{\Phi}_m\mathbf{x}_m + \mathbf{n}_m, m \in \mathcal{M},
		\label{De_obs}
	\end{equation} 
	where $\mathbf{y}_m = \mathrm{vec}({\mathbf{Y}_m}) \in \mathbb{C}^{QK\times1}$ and $\mathbf{x}_m = \mathrm{vec}\left(\mathbf{H}_{\mathrm{T},m}^{\mathrm{L}}\right) \in \mathbb{C}^{NK \times 1}$ are the stacked received signal and local angular-delay channel, respectively. 
	${\boldsymbol{\Phi}}_m = \mathbf{F}_{\mathrm{D}}^{\mathrm{T}}\otimes \left(\mathbf{W}_m\mathbf{F}_{\mathrm{L}}^{\mathrm{H}} \right) \in \mathbb{C}^{QK \times NK}$ denotes the equivalent measurement matrix. 
	{Additionally, $\mathbf{n}_m = \mathrm{vec}({\mathbf{N}_m}) \sim \mathcal{CN}(\mathbf{n}_m; 0, \zeta_m^{-1} \mathbf{I}_{QK})$ denotes the noise vector with covariance $\zeta_m^{-1} \mathbf{I}_{QK}$, where $\zeta_m$ is the unknown noise precision parameter to be estimated as well. Without loss of generality, a conjugated prior probability distribution is assigned for ${\zeta}_m$, i.e., $p(\zeta_m)\propto \zeta_m^{-1}$.}
	
	\begin{figure}
		\centering
		\includegraphics[width=0.4\textwidth]{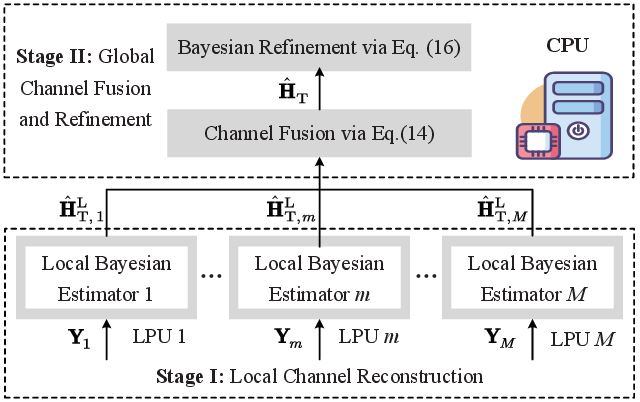}
		\caption{Proposed two-stage channel estimation scheme.}
		\label{Scheme}
	\end{figure}
	
	According to the model in (\ref{De_obs}), the local channel estimation for subarray $m$ can be derived as a sparse signal recovery problem based on the Bayesian inference framework, i.e.,
	\begin{equation}
		\hat{\mathbf{x}}_m \propto \arg \max_{\mathbf{x}_m}\quad p(\mathbf{y}_m|\mathbf{x}_m, \zeta_m)p(\mathbf{x}_m)p(\zeta_m), m \in \mathcal{M},
		\label{MAP1}
	\end{equation} 
	where $p(\mathbf{y}_m|\mathbf{x}_m, \zeta_m)$ and ${p(\mathbf{x}_m)}$ denote the joint likelihood function of $\mathbf{x}_m$ and $\zeta_m$ and the prior probability distribution of $\mathbf{x}_m$, respectively. 
	Notably, the estimation performance of problem (\ref{MAP1}) is dependent on the accuracy of prior distribution $p(\mathbf{x}_m)$. 
	Meanwhile, due to the limited computational capability of LPUs, a computationally efficient estimation scheme is needed.
	To solve this problem, we proposed a SBL-GNNs algorithm, which will be elaborated on in the Sec. \ref{section4}. 
	\vspace{-1em}
	\subsection{Stage II: Global Channel Fusion with Refinement} 
	Once the angular-delay channels $\hat{\mathbf{x}}_m$ are obtained, LPUs upload the estimates $\hat{\mathbf{H}}_{\mathrm{T},m}^{\mathrm{L}}$, to the CPU node for further fusion and refinement, where $\hat{\mathbf{H}}_{\mathrm{T},m}^{\mathrm{L}}$ denotes the reshaped angular-delay channel matrix based on $\hat{\mathbf{x}}_m$.
	\subsubsection{\textbf{Channel Fusion in Angular Domain}}
	The purpose of the central fusion is to align the local estimate $\hat{\mathbf{H}}_{\mathrm{T},m}^{\mathrm{L}}$ into the global angular domain with enhanced spatial resolution, thereby facilitating to leverage the correlations of subarray channel in the angular-delay domain. Specifically, leveraging the linear property of the DFT transformation matrix, we have
	\begin{equation}
		\mathbf{H}_{\mathrm{T}} = \sum_{m=1}^{M} \mathbf{F}_{\mathrm{A},m}\mathbf{H}_m\mathbf{F}_\mathrm{D}^{\mathrm{H}} \triangleq \sum_{m=1}^{M} \mathbf{H}_{\mathrm{T},m} , m \in \mathcal{M},
		\label{AD_Domain2}
	\end{equation}
	where $\mathbf{H}_{\mathrm{T},m}$ denotes the angular-delay channels of subarray $m$ under the global angular transformation matrix $\mathbf{F}_{\mathrm{A}} = \left[\mathbf{F}_{\mathrm{A}, 1}, \mathbf{F}_{\mathrm{A}, 2}, \cdots, \mathbf{F}_{\mathrm{A}, M} \right]$. 
	Based on this, the aggregated channel in the CPU can be given by
	\begin{equation}
		\hat{\mathbf{H}}_{\mathrm{T}} = \sum_{m=1}^{M} \mathbf{F}_{\mathrm{A},m}\mathbf{F}_{\mathrm{L}}^{\mathrm{H}}\hat{\mathbf{H}}_{\mathrm{T},m}^{\mathrm{L}}.
		\label{AWGN_Model2}
	\end{equation}
	
	\begin{remark}
		Compared to the local angular-delay channels $\mathbf{H}_{\mathrm{T},m}^{\mathrm{L}}$, the global angular-delay channel $\mathbf{H}_{\mathrm{T}}$ can more efficiently capture the inherent correlations among subarrays. Specifically, as the subarrays are spatially close to one another, $\mathbf{H}_{\mathrm{T},m}$ shares common angular and delay sparse supports, and exhibits high correlation. By effectively exploiting these common sparse supports and correlations, the reconstruction performance can be significantly enhanced.
	\end{remark}
	\subsubsection{\textbf{Bayesian Refinement}}
	{Utilizing the aggregated $\hat{\mathbf{H}}_{\mathrm{T}}$, the central refinement task can be derived as a denoising problem based on the following AWGN observation model, i.e.,}
	\begin{equation}
		\mathbf{r} = \mathbf{x} + \mathbf{z},
		\label{AWGN_Model}
	\end{equation}
	where $\mathbf{r} = \mathrm{vec}(\hat{\mathbf{H}}_{\mathrm{T}}) \in \mathbb{C}^{N_{{t}}K}$, $\mathbf{x} = \mathrm{vec}({\mathbf{H}}_{\mathrm{T}}) \in \mathbb{C}^{N_{{t}}K}$; 
	$\mathbf{z}$ denotes the observation noise with each entry $z_j$, $j \in \mathcal{J}$, obeying $\mathcal{CN}(z_j; 0, \kappa^{-1})$ and $\kappa$ indicating the unknown precision parameter of observation noise.
	 
	The aim of central refinement is to enhance the reconstruction performance by exploiting the sparsity and correlation of the global angular-delay channel. Similar to the local estimation of $\mathbf{x}_m$, the central refinement can be derived as
	\begin{equation}
		\tilde{\mathbf{x}} = \arg \max_{\mathbf{x}} \quad p(\mathbf{r}|\mathbf{x}, \kappa)p(\mathbf{x})p(\kappa),
		\label{MAP2}
	\end{equation}
	where $p(\mathbf{r}|\mathbf{x}, \kappa)$ denotes the joint likelihood function of $\mathbf{x}$ and $\kappa$; $p(\mathbf{x})$ and $p(\kappa)\propto \kappa^{-1}$ denote the prior probability distribution of $\mathbf{x}$ and $\kappa$, respectively.
	To solve problem (\ref{MAP2}) efficiently, we propose a variational message-passing-based denoising algorithm that leverages a Markov chain-based hierarchical prior to capture the sparsity and correlation of $\mathbf{x}$, as detailed in the Sec. \ref{section5}.
	Once the global angular-delay channel $\tilde{\mathbf{x}}$ is obtained, the global antenna-frequency-domain channel can be reconstructed as $\tilde{\mathbf{H}} = \mathbf{F}_{\mathrm{A}}^{\mathrm{F}}\tilde{\mathbf{H}}_{\mathrm{T}}\mathbf{F}_{\mathrm{D}}$ = $[\tilde{\mathbf{H}}_1^{\mathrm{T}}, \tilde{\mathbf{H}}_2^{\mathrm{T}}, \cdots, \tilde{\mathbf{H}}_M^{\mathrm{T}}]$. 
	Then, $\tilde{\mathbf{H}}_m
	$ are sent to the corresponding LPUs for subsequent algorithm designs.
	
	\begin{remark}
		Owing to its reconstruction-then-refinement strategy, the proposed two-stage estimation scheme achieves accuracy comparable to centralized methods while maintaining low computational complexity. Specifically, in the first stage, high-dimensional matrix inversion and matrix-vector multiplications are decomposed into several parallel low-dimensional operations, as detailed in Sec. \ref{section4}. In the second stage, only scalar operations are involved, as discussed in Sec. \ref{section5}. Consequently, the proposed approach significantly reduces computational complexity compared to traditional centralized estimation schemes.
		Moreover, the centralized refinement step effectively exploits correlations across subchannels from different LPUs, further enhancing estimation performance and bringing it closer to that of centralized estimation.
	\end{remark}
	
	\section{Proposed Local Channel Construction Scheme}
	\label{section4}
	In this section, we first provide a brief overview of the conventional SBL scheme and then introduce the proposed SBL-GNNs estimator.
	\subsection{SBL Estimator}
	It is assumed that the elements in $\mathbf{x}_m$ are independent and the following hierarchical sparsity-promoting prior is used
	\begin{align}
		\label{p_xga} p(\mathbf{x}_m|\boldsymbol{\gamma}_m) &= \prod_{n\in \mathcal{V}}\mathcal{CN}(x_{m,n}; 0, \gamma_{m,n}^{-1}), m \in \mathcal{M}\\
		\label{p_ga} p(\boldsymbol{\gamma}_m) &= \prod_{n\in \mathcal{V}} p(\gamma_{m,n}; \boldsymbol{\lambda}_{m,n}), m \in \mathcal{M},
	\end{align}
	 where $x_{m,n}$ and $\gamma_{m,n}$ are the $n$-th elements of $\mathbf{x}_m$ and $\boldsymbol{\gamma}_m$ corresponding to subarray $m$ with $n\in \mathcal{V}$, and $\gamma_{m,n}^{-1}$ denote the prior variance. In addition, $\boldsymbol{\lambda}_{m,n}$ denote the prior distribution parameters of ${\gamma}_{m,n}$. Utilizing (\ref{p_xga}), (\ref{p_ga}) and Expectation-Maximization (EM) framework, problem (\ref{MAP1}) can be efficiently solved by SBL-based algorithms \cite{Variance_State_1, Variance_State_2}, and the update steps in each iteration are as follows:
	
	\textbf{Observation Module (E-Step):}
	\begin{align}
		\label{update1} \boldsymbol{\Sigma}_m^{(t)}  &= \left(\zeta^{(t-1)}_m\boldsymbol{\Phi}_m^{\mathrm{H}}\boldsymbol{\Phi}_m + \mathrm{diag}\left(\boldsymbol{\gamma}_m^{(t-1)}\right)\right)^{-1},\\
		\label{update2} \boldsymbol{\mu}_m^{(t)} &=\zeta_m^{(t-1)} \boldsymbol{\Sigma}_m^{(t)}\boldsymbol{\Phi}_m^{\mathrm{H}}\mathbf{y}_m.
	\end{align}
	
	\textbf{Prior Update Module (M-Step):}
	\begin{align}
		\label{update3} \boldsymbol{\gamma}_m^{(t)} &= g_{1}\left(\boldsymbol{\mu}_m^{(t-1)}, \boldsymbol{\Sigma}_m^{(t-1)}, \boldsymbol{\lambda}_{m,n}^{(t-1)}\right),\\
		\label{update4} \boldsymbol{\lambda}_{m,n}^{(t)} &= g_{2}\left(\boldsymbol{\mu}_m^{(t-1)},\boldsymbol{\Sigma}_m^{(t-1)}, \boldsymbol{\gamma}_m^{(t)}\right),\\
		\label{update5} \zeta^{(t)}_m &= g_{3}\left(\boldsymbol{\mu}_m^{(t-1)},\boldsymbol{\Sigma}_m^{(t-1)}, \mathbf{y}_m\right),
	\end{align}
	where $\boldsymbol{\Sigma}_m^{(t)}$ and $\boldsymbol{\mu}_m^{(t)}$ represent the posterior covariance matrix and mean of $\mathbf{x}_m$ at the $t$-th iteration, respectively; {$g_1(\cdot)$, $g_2(\cdot)$, and $g_3(\cdot)$ denote the estimators for the prior variance, the hyperparameters of the prior variance, and the noise precision parameters, respectively. Specifically, $g_1(\cdot)$ leverages the $(t-1)$-th estimation results from the observation module and the $(t-1)$-th hyperparameters of the prior variance to update the $t$-th variance estimation. Subsequently, using the $(t-1)$-th estimation results from the observation module along with the current variance estimation, the hyperparameters of the prior variance are updated. Finally, the noise precision parameters are refined based on the inputs from the observation module and the observed signal. These updates are performed iteratively until convergence, at which point the posterior mean $\boldsymbol{\mu}_m^{(t)}$ is adopted as the final estimate of $\mathbf{x}_m$.}
	
	{Recently, various approaches have been proposed for designing $g_1(\cdot)$, $g_2(\cdot)$, and $g_3(\cdot)$, including mean-field-based methods \cite{Variance_State_1} and variational inference techniques \cite{Variance_State_2}. However, conventional sparse prior models often fail to fully leverage the correlations among channel entries. For instance, the Bernoulli-Gaussian prior \cite{StdSBL} and two-layer nested Gaussian-Gamma prior \cite{UAMP} assume independence among channel entries, which leads to performance degradation when addressing block-sparse signals. 
	Although advanced approaches such as the Markov chain-based and the 4-connected MRF-based hierarchical priors \cite{Anzheng2,Centralized_CE_6,Variance_State_1,Variance_State_2} have been developed, these methods only consider correlations among adjacent entries. This limitation renders them inadequate for effectively capturing the global correlations of local angular-delay channels.}

	\subsection{Proposed SBL-GNNs Estimator}
	To overcome the  limitations of conventional sparse prior models, we propose a GNN-based approach within the SBL framework to estimate the prior variance of the angular-delay channel and its associated parameters. Specifically, the sparse prior of $\mathbf{x}_m$ is modeled as a fully connected MRF. Leveraging the structural properties of the pairwise MRF, a GNN architecture is designed to replace the non-linear mapping functions $g_1(\cdot)$ and $g_2(\cdot)$ in the M-step are replaced with GNNs, thereby efficiently capturing the dependencies among channel coefficients. Consequently, the proposed SBL-GNNs algorithm can be regarded as a deep-unfolding counterpart of the traditional SBL algorithm with a fully connected MRF-based sparse prior.
	
	According to SBL principles, the output of the observation module in the $t$-th iteration can be equivalently characterized as an AWGN model, i.e.,
	\begin{equation}
		\boldsymbol{\mu}_m^{(t)} = \mathbf{x}_m + \mathbf{w}_m^{(t)}, m \in \mathcal{M},
		\label{AWGN2}
	\end{equation}
	where $\mathbf{w}_m^{(t)}$ denotes the equivalent residual noise in the $t$-th iteration with covariance matrix $\boldsymbol{\Sigma}_m^{(t)}$. 
	To fully capture the global correlations of local angular-delay channels, consider a MRF-based hierarchical prior given by
	\begin{align}
		p(\mathbf{x}_m, \boldsymbol{\gamma}_m; \mathbf{s}_m)&={p(\mathbf{x}_m| \boldsymbol{\gamma}_m)}{p(\boldsymbol{\gamma}_m| \mathbf{s}_m)}{p(\mathbf{s}_m)},
		\label{prior}
	\end{align}
	where the first layer is assigned a conditional
	Gaussian prior $p(\mathbf{x}_m| \boldsymbol{\gamma}_m)= \mathcal{CN}\left(\mathbf{x}_m; \mathbf{0}, \mathbf{R}_m\right)$ with $\mathbf{R}_m = \mathrm{diag}\left(\boldsymbol{\gamma}_m^{-1}\right) \in \mathbb{R}^{NK\times NK}$. {The second layer $p(\boldsymbol{\gamma}_m|\mathbf{s}_m)$ is modeled as a conditional Gamma distribution given by
	\begin{align}
		\label{2nd_layer} p(\boldsymbol{\gamma}_m|\mathbf{s}_{m})&=\prod_{n\in \mathcal{V}} p(\gamma_{m,n}|{s}_{m,n}), m \in \mathcal{M}, \\
		p(\gamma_{m,n}|{s}_{m,n}) &= \mathcal{G}a(\gamma_{m,n}; a_m,b_m)\delta(1-s_{m,n})\\
		\notag &+\delta(\gamma_{m,n})\delta(1+s_{m,n}), m \in \mathcal{M}, n \in \mathcal{V},
	\end{align}
	where $s_{m,n} \in \left\{-1,1\right\}$ denotes the $n$-th element of $\mathbf{s}_m$ is a hidden state variable of variance precision parameter $\gamma_{m,n}$ with $m \in \mathcal{M}$ and $n \in \mathcal{V}$;}
	$\delta(\cdot)$ denotes the Delta function;
	$a_m$ and $b_m$ denote the shape parameters of Gamma distribution. 
	Finally, the third layer is assigned a MRF given by
	\begin{equation}
		p(\mathbf{s}_m) = \left(\prod_{n \in \mathcal{V}}\prod_{k\neq n}\psi(s_{m,n}, s_{m,k})\right)^{\frac{1}{2}}\prod_{n \in \mathcal{V}}\phi(s_{m,n}),
		\label{3rd_layer}
	\end{equation}
	{where $\psi(s_{m,n}, s_{m,k}) = \exp(\beta_m s_{m,n} s_{m,k})$ and $\psi(s_{m,n}) =  \exp(-\alpha_m s_{m,n})$, $m\in \mathcal{M}$, $n,k\in \mathcal{V}$ denote the pair-potential and self-potential functions with $\alpha_m$ and $\beta_m$ as hyperparameters of MRF.}
	
	\begin{lemma}
		The joint posterior probability distribution $p(\mathbf{x}_m,\boldsymbol{\gamma}_m,\mathbf{s}_m|\boldsymbol{\mu}_m^{(t)})$ can be factorized as a pair-wise MRF, which is given by (\ref{MRF}), as shown in the top of this page,
		where $\mathbf{v}_{m,n}^{(t)}$ and $v^{n,k}_m(t)$ indicate the $n$-th column and $(n,k)$-th entry of $\mathbf{V}_m^{(t)} = (\boldsymbol{\Sigma}_m^{(t)})^{-1}$, respectively; $\psi_{\mathrm{post}}(x_{m,n})$ and $\phi_{\mathrm{post}}(x_{m,n}, x_{m,k})$ denote the self potential of $x_{m,n}$, $\forall n \in 
		\mathcal{V}$ and pair-potential functions between $x_{m,n}$ and $x_{m,k}$, $\forall n,k \in \mathcal{V}$ and $k\neq n$, respectively.
		\begin{figure*}[!t]
			\normalsize
			\setcounter{MYtempeqncnt}{\value{equation}}
			\setcounter{equation}{28}
			\begin{equation}
				\begin{aligned}
					p\left(\mathbf{x}_m,\boldsymbol{\gamma}_m,\mathbf{s}_m|\boldsymbol{\mu}_m^{(t)}\right) &\propto 
					\prod_{n\in \mathcal{V}}\psi_{\mathrm{post}}\left(x_{m,n}\right)\prod_{n\in \mathcal{V}}\prod_{k\neq n}\phi_{\mathrm{post}}\left(x_{m,n}, x_{m,k}\right), m \in \mathcal{M}, n,k \in \mathcal{V},\\
					\psi_{\mathrm{post}}\left(x_{m,n}\right) &\triangleq \psi_{\mathrm{like}}\left((\boldsymbol{\lambda}_m^{(t)})^{\mathrm{H}}\mathbf{v}_{m,n}^{(t)}, v^{n,n}_m(t),x_{m,n}\right)\psi_{\mathrm{pri}}\left(\alpha_m, x_{m,n}\right),\\
					\phi_{\mathrm{post}}\left(x_{m,n}, x_{m,k}\right) &\triangleq\phi_{\mathrm{like}}\left(v^{n,k}_m(t), x_{m,n}, x_{m,k}\right) \phi_{\mathrm{pri}}\left(\beta_m, x_{m,n}, x_{m,k}\right).
				\end{aligned}
				\label{MRF}
			\end{equation}
			\hrulefill
			\vspace*{4pt}
		\end{figure*}
	\end{lemma}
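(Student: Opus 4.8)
The plan is to derive the factorization of $p(\mathbf{x}_m,\boldsymbol{\gamma}_m,\mathbf{s}_m|\boldsymbol{\mu}_m^{(t)})$ by starting from Bayes' rule and carefully marginalizing out the hierarchical hyperparameters $\boldsymbol{\gamma}_m$ and $\mathbf{s}_m$, then identifying the remaining quadratic form in $\mathbf{x}_m$ with a pairwise structure. First I would write $p(\mathbf{x}_m,\boldsymbol{\gamma}_m,\mathbf{s}_m|\boldsymbol{\mu}_m^{(t)}) \propto p(\boldsymbol{\mu}_m^{(t)}|\mathbf{x}_m)\, p(\mathbf{x}_m|\boldsymbol{\gamma}_m)\, p(\boldsymbol{\gamma}_m|\mathbf{s}_m)\, p(\mathbf{s}_m)$, using the AWGN surrogate model in (\ref{AWGN2}), so that $p(\boldsymbol{\mu}_m^{(t)}|\mathbf{x}_m) = \mathcal{CN}(\boldsymbol{\mu}_m^{(t)}; \mathbf{x}_m, \boldsymbol{\Sigma}_m^{(t)})$. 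Expanding the exponent of this Gaussian gives a quadratic form $-(\boldsymbol{\mu}_m^{(t)} - \mathbf{x}_m)^{\mathrm{H}} \mathbf{V}_m^{(t)} (\boldsymbol{\mu}_m^{(t)} - \mathbf{x}_m)$ with $\mathbf{V}_m^{(t)} = (\boldsymbol{\Sigma}_m^{(t)})^{-1}$, which I would split into diagonal terms (contributing self-potentials $\psi_{\mathrm{like}}$, one per node $x_{m,n}$, each depending on $v_m^{n,n}(t)$ and the cross term with $\boldsymbol{\mu}_m^{(t)}$) and off-diagonal terms (contributing pair-potentials $\phi_{\mathrm{like}}$ between $x_{m,n}$ and $x_{m,k}$, each depending on $v_m^{n,k}(t)$). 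This is where the pairwise graph structure on $\mathbf{x}_m$ originates — from the non-diagonal precision matrix of the E-step posterior.

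Next I would handle the prior side. The key step is to marginalize: $p(\mathbf{x}_m) = \sum_{\mathbf{s}_m} \int p(\mathbf{x}_m|\boldsymbol{\gamma}_m)\, p(\boldsymbol{\gamma}_m|\mathbf{s}_m)\, p(\mathbf{s}_m)\, d\boldsymbol{\gamma}_m$, but since the factor graph in the claim retains $\boldsymbol{\gamma}_m$ and $\mathbf{s}_m$ as variables, what is actually needed is to show that the product $p(\mathbf{x}_m|\boldsymbol{\gamma}_m)\, p(\boldsymbol{\gamma}_m|\mathbf{s}_m)\, p(\mathbf{s}_m)$, after absorbing the per-node Gamma/Delta structure of (\ref{2nd_layer}) and the MRF self-potentials $\phi(s_{m,n})$ of (\ref{3rd_layer}), collapses into node-wise prior factors $\psi_{\mathrm{pri}}(\alpha_m, x_{m,n})$ (one for each $n$, obtained by integrating out $\gamma_{m,n}$ against the two-component mixture and summing over $s_{m,n}\in\{-1,1\}$ weighted by $e^{-\alpha_m s_{m,n}}$) times pairwise factors $\phi_{\mathrm{pri}}(\beta_m, x_{m,n}, x_{m,k})$ coming from the MRF pair-potentials $\psi(s_{m,n}, s_{m,k}) = e^{\beta_m s_{m,n} s_{m,k}}$. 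The subtlety is that the $s$-coupling in (\ref{3rd_layer}) is between hidden states, not directly between the $x$'s, so I would argue that after the $\gamma$-integration each $\gamma_{m,n}$ (hence each $x_{m,n}$) becomes a deterministic function of $s_{m,n}$ in distributional terms, allowing the $\beta_m$-coupling on $\{s_{m,n},s_{m,k}\}$ to be re-expressed as an equivalent coupling on $\{x_{m,n},x_{m,k}\}$, which is exactly the content of $\phi_{\mathrm{pri}}$. Finally, I would combine the likelihood-side and prior-side factorizations: multiplying $\psi_{\mathrm{like}}$ with $\psi_{\mathrm{pri}}$ node-by-node yields $\psi_{\mathrm{post}}(x_{m,n})$, and multiplying $\phi_{\mathrm{like}}$ with $\phi_{\mathrm{pri}}$ edge-by-edge yields $\phi_{\mathrm{post}}(x_{m,n}, x_{m,k})$, giving precisely the claimed form (\ref{MRF}) with the $(\boldsymbol{\lambda}_m^{(t)})^{\mathrm{H}}\mathbf{v}_{m,n}^{(t)}$ term identified as the linear coefficient arising from the cross term $\boldsymbol{\mu}_m^{(t)\mathrm{H}}\mathbf{V}_m^{(t)}\mathbf{x}_m$.

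The main obstacle I anticipate is the prior-side manipulation: rigorously justifying that the three-layer hierarchy $p(\mathbf{x}_m|\boldsymbol{\gamma}_m)p(\boldsymbol{\gamma}_m|\mathbf{s}_m)p(\mathbf{s}_m)$ can be rewritten so that the MRF coupling originally defined on the discrete states $\mathbf{s}_m$ transfers cleanly to a pairwise coupling on the continuous variables $\mathbf{x}_m$, given the degenerate (Delta-function) structure of $p(\gamma_{m,n}|s_{m,n})$. I would need to be careful that the $s_{m,n} = -1$ branch, which forces $\gamma_{m,n}\to\infty$ and hence $x_{m,n}\to 0$, is handled consistently (e.g. via a limiting argument or by treating $\delta(\gamma_{m,n})$ as the improper limit of a Gamma density), so that the marginalized prior on each $x_{m,n}$ is a well-defined spike-and-slab-type mixture. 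Everything else — expanding the Gaussian exponent, splitting diagonal from off-diagonal, and recombining — is routine bookkeeping once the variable-substitution step is pinned down.
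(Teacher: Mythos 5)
Your likelihood-side argument is exactly the paper's: expand the Gaussian exponent $-(\mathbf{x}_m-\boldsymbol{\mu}_m^{(t)})^{\mathrm{H}}(\boldsymbol{\Sigma}_m^{(t)})^{-1}(\mathbf{x}_m-\boldsymbol{\mu}_m^{(t)})$, split the precision matrix $\mathbf{V}_m^{(t)}$ into diagonal terms (self-potentials $\psi_{\mathrm{like}}$, carrying the linear coefficient $(\boldsymbol{\mu}_m^{(t)})^{\mathrm{H}}\mathbf{v}_{m,n}^{(t)}$) and off-diagonal terms (pair-potentials $\phi_{\mathrm{like}}$), then multiply node-by-node and edge-by-edge with the prior potentials. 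That part is fine and matches Appendix A.

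Where you diverge is on the prior side, and the divergence introduces a gap you do not need. The Lemma concerns the \emph{joint} posterior $p(\mathbf{x}_m,\boldsymbol{\gamma}_m,\mathbf{s}_m|\boldsymbol{\mu}_m^{(t)})$, not the marginal $p(\mathbf{x}_m|\boldsymbol{\mu}_m^{(t)})$, so no integration over $\boldsymbol{\gamma}_m$ or summation over $\mathbf{s}_m$ is required. The paper simply regroups the already-factorized prior $\prod_n p(x_{m,n}|\gamma_{m,n})p(\gamma_{m,n}|s_{m,n})\phi(s_{m,n})\cdot\prod_n\prod_{k\neq n}\psi(s_{m,n},s_{m,k})$ into node-wise factors labeled $\psi_{\mathrm{pri}}(\alpha_m,x_{m,n})$ and edge-wise factors labeled $\phi_{\mathrm{pri}}(\beta_m,x_{m,n},x_{m,k})$, with the $\gamma_{m,n}$ and $s_{m,n}$ dependence absorbed implicitly into those potentials; the ``transfer'' of the $\mathbf{s}$-coupling onto the $\mathbf{x}$'s is purely notational. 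By contrast, the marginalization you sketch --- integrating out each $\gamma_{m,n}$ and summing over $\mathbf{s}_m$, then claiming the $\beta_m$-coupling re-expresses as a pairwise coupling on $\{x_{m,n},x_{m,k}\}$ --- would not go through as stated: summing a fully connected discrete MRF out of the hierarchy generically produces higher-order interactions among the $x_{m,n}$, not a pairwise factorization, so the ``equivalent coupling on $\{x_{m,n},x_{m,k}\}$'' you invoke is not justified. The fix is simply to drop the marginalization and keep $\boldsymbol{\gamma}_m,\mathbf{s}_m$ as explicit arguments of the grouped potentials, which is what the statement actually asserts; with that correction your argument reduces to the paper's.
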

	\begin{proof}
		See Appendix A.
	\end{proof}
	
 Utilizing the \textbf{Lemma 1}, the precision vector $\boldsymbol{\gamma}_m^{(t)}$ is learned by maximizing  a posteriori  probability given by
	\begin{equation}
		\boldsymbol{\gamma}_m^{(t+1)} = \arg \max_{\boldsymbol{\gamma}_m} \quad p\left(\boldsymbol{\gamma}_m|\boldsymbol{\mu}_m^{(t)}\right), m \in \mathcal{M},
	\end{equation}
	where $p(\boldsymbol{\gamma}_m|\boldsymbol{\mu}_m^{(t)})$ denotes the marginal posteriori probability of $\boldsymbol{\gamma}_m$ given by
	\begin{equation}
		p\left(\boldsymbol{\gamma}_m|\boldsymbol{\mu}_m^{(t)}\right) = \int p(\mathbf{x}_m,\boldsymbol{\gamma}_m,\mathbf{s}_m|\boldsymbol{\mu}_m^{(t)}) \mathrm{d}\mathbf{x}_m \mathrm{d}\mathbf{s}_m.
		\label{posterior_ga}
	\end{equation}
	However, the introduction of the MRF couples the variables $x_{m,n}$, making the integral in (\ref{posterior_ga}) computationally intractable. To address this challenge, we propose leveraging  GNNs to learn the prior variance parameters. Specifically, using the pairwise MRF model provided in (\ref{MRF}), the dependency among the angular-delay channel coefficients $x_{m,n}$ for $\forall m \in \mathcal{M}$ and $\forall n \in \mathcal{V}$ can be represented by an undirected graph $G^m = \{V^m, E^m\}$. 
	Here, $V^m = \{x_{m,1}, x_{m,2}, \dots, x_{m,NK}\}$ denotes the set of nodes, and $E^m = \{e^m_{n,k}\}$, where $n, k \in \mathcal{V}$ and $k \neq n$, represents the set of edges capturing the dependencies. The prior variance $\boldsymbol{\gamma}_m$ can then be interpreted as a hidden attribute of $G^m$ and effectively learned using GNNs applied to $G^m$.

	According to the above analysis, the block diagram of the proposed SBL-GNNs is illustrated in Fig. \ref{SBL_GNN}, where the network consists of $T$ cascade layers. Each layer shares the same structure, comprising GNNs module and an observation module. 
	The input of the SBL-GNNs network is the received signal of each subarray $\mathbf{y}_m$ and the noise level $\zeta_m$, with the initial $\boldsymbol{\gamma}_m^{(0)} = \mathbf{1}_{NK}$ and $\zeta_m^{(0)} = 1$, and the output is the posterior mean $\boldsymbol{\mu}_m^{(T)}$. 
	For any middle layer $t$ with $1<t<T$, its input is the received signal $\mathbf{y}_m$, the estimated variance $\boldsymbol{\gamma}_m^{(t-1)}$ and noise level $\zeta_m^{(t-1)}$ from $t-1$ layer. 
	Similar to the SBL-based algorithms \cite{Variance_State_1, Variance_State_2}, the SBL-GNNs network is sequentially executed until termination by a fixed number of layers. 
	\begin{figure*}
		\centering
		\includegraphics[width=0.7\textwidth]{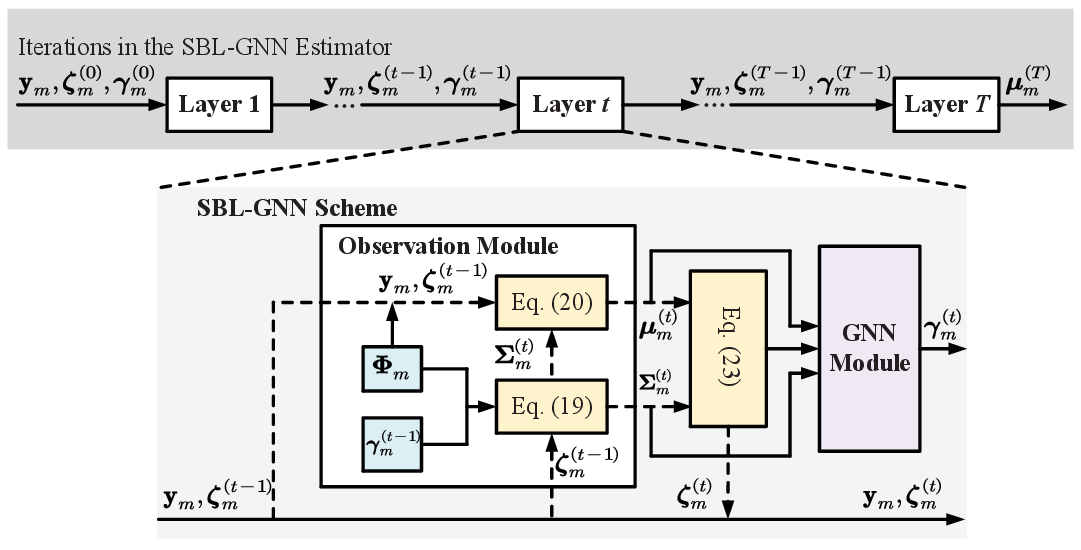}
		\caption{The structure of the proposed SBL-GNNs network}
		\label{SBL_GNN}
	\end{figure*}
	
	For notation convenience, we omit the index $t$ in subsequent section. For GNNs module, the output message from observation module can be seen as the prior knowledge for the variable node $x_{m,n}$, which is given by a Gaussian distribution with mean $\boldsymbol{\lambda}_m$ and covariance matrix $\boldsymbol{\Sigma}_m$. Additionally, the self potential of each node is associated with $\alpha_m$. Therefore, the node attribute $\mathbf{a}^{(l)}_{m,n} \in \mathbb{R}^{4\times 1}$ is yielded by 
	\begin{equation}
		\mathbf{a}_{m,n}^{(l)} = \left[\boldsymbol{\lambda}_m^{\mathrm{H}}\mathbf{v}_{m,n}, v_m^{n,n}, \alpha_m , \zeta_m \right]^{\mathrm{T}}, m\in \mathcal{M}, n\in \mathcal{V},
	\end{equation}
	where $l$ denotes the index of the number of iterations in the GNNs modules.
	The attribute $\mathbf{f}_m^{n,k} \triangleq \left[v^{n,k}_m, \beta_m, \zeta_m\right]^{\mathrm{T}} \in \mathbb{R}^{3\times 1}$ of edge $e_{m,k}^m$ is obtained by extracting the information from the pair-potential function in (\ref{MRF}). 
	Additionally, each variable node associates with a hidden vector $\mathbf{u}^{(l)}_{m,n}$, which is determined by the node attribute $\mathbf{a}^{(l)}_{m,n}$ and the messages from other nodes. 
	In particular, for the first iteration, the hidden vector $\mathbf{u}^{(0)}_{m,n}$ is given by
		\begin{equation}
			\mathbf{u}^{(0)}_{m,n} = \mathbf{W}_1\mathbf{a}_{m,n} + \mathbf{b}_1, m\in \mathcal{M}, n\in \mathcal{V},
			\label{u0}
		\end{equation}
	where $\mathbf{W}_1 \in \mathbb{R}^{N_{\mathrm{u}}\times 4}$ and $\mathbf{b}_1 \in \mathbb{R}^{N_{\mathrm{u}}\times 1}$ denote the learnable weight matrix and bias with $N_{\mathrm{u}}$ indicating the output message dimension of the variable node.
	In the following, we elaborate on the propagation, aggregation, and readout sub-modules of GNNs module.  
	\subsubsection{Propagation module} In propagation module, for any variable nodes $x_{m,n}$  and $x_{m,k}$ with $n\neq k$, there exists a factor node, which concentrates the edge attributes and the messages $\mathbf{u}^{(l)}_{m,n}$ and $\mathbf{u}^{(l)}_{m,k}$ from $x_{m,n}$ and $x_{m,k}$ and is given by
	\begin{equation}
		\mathbf{c}_m^{n,k}(l) = \mathcal{P}\left(\mathbf{u}^{(l-1)}_{m,n}, \mathbf{u}^{(l-1)}_{m,k}, \mathbf{f}_m^{n,k}\right), m\in \mathcal{M}, n,k \in \mathcal{V}, 
		\label{message_ij}
	\end{equation}
	where $\mathcal{P}(\cdot)$ denotes a multi-layer perceptron (MLP) with two hidden layers of sizes $N_{\mathrm{h}_1}$ and $N_{\mathrm{h}_2}$ and an output layer of size $N_{\mathrm{u}}$. 
	Furthermore, the rectifier linear unit (ReLU) activation function is used at the	output of each hidden layer. 
	Finally, the outputs $\mathbf{c}_m^{n,k}(l)$ are fed back to neighboring variable nodes. 
	\subsubsection{Aggregation module}In each variable node $x_{m,n}$, all messages $\mathbf{c}_m^{n,k}(l) \in \mathbb{R}^{N_{\mathrm{u}}\times 1}$ associated with connected edges $e^{n,k}_m$ are first added, and then the sum and the node attribute are concentrated, i.e., 
	\begin{equation}
		\mathbf{c}_m^{n}(l) = \left[\sum_{k\neq n}\left(\mathbf{c}_m^{n,k}(l)\right)^{\mathrm{T}}, \mathbf{a}_{m,n}^{\mathrm{T}} \right]^{\mathrm{T}},
		\label{message_i}
	\end{equation}
	Utilizing $\mathbf{c}_m^{n}(l)$, the hidden vector for each variable node can be updated according to 
	\begin{align}
		\label{hidden1} \mathbf{g}_{m,n}^{(l)} &= \mathcal{R}\left(\mathbf{g}_{m,n}^{(l-1)}, \mathbf{c}_m^{n}(l)\right),\\
		\label{hidden2} \mathbf{u}_{m,n}^{(l)} &= \mathbf{W}_2\mathbf{g}^{(l)}_{m,n} + \mathbf{b}_2,
	\end{align}
	where $\mathcal{R}(\cdot)$ denote the the gated recurrent unit (GRU) network, whose current and previous hidden states are $\mathbf{g}_{m,n}^{(l)}$ and $\mathbf{g}_{m,n}^{(l-1)}$, respectively. $\mathbf{W}_2 \in \mathbb{R}^{N_{\mathrm{u}}\times N_{\mathrm{h}_1}}$ and $\mathbf{b}_2 \in \mathbb{R}^{N_{\mathrm{u}}\times 1}$ denote the learnable weight matrix and bias. 
	The updated feature vector $\mathbf{u}_{m,n}^{(l)}$ is then sent to the propagation module for the next iteration.
	
	\subsubsection{Readout module}After $L$ rounds of the message passing
	between the propagation and aggregation modules, a readout
	module is utilized to output the estimated result, i.e., 
	\begin{equation}
		{\gamma}_{m,n}^{(t)} = \mathcal{H}(\mathbf{u}_{m,n}^{(L)}),
		\label{prior_variance}
	\end{equation}
	where $\mathcal{H}(\cdot)$ denotes the MLP with two hidden layers of sizes $N_{\mathrm{h}_1}$ and $N_{\mathrm{h}_2}$ and an output layer of size $1$, where the ReLU activation is utilized as the output of the last layer. 
	
	\begin{remark}
		Due to the differences in estimation and prior distribution, the pair-wise MRF representation exhibits significant differences with existing works in \cite{GNN1, GNN2, GNN3}, resulting in distinct graph attributes and GNN architectures. Thus, we reconsider the graph attributes, including the pair-wise MRF representation (see (\ref{MRF})) as well as the node and edge attributes tailored to the channel estimation problem.
		Meanwhile, from a system perspective, the proposed GNN is designed to capture the correlations across different channel coefficients through the fully connected MRF, whereas the GNNs in \cite{GNN1,GNN2,GNN3} primarily address correlated and non-Gaussian residual noise. 
		Additionally, due to the differences in the types of parameters to be estimated, the readout module of GNNs differ fundamentally from existing works. Specifically, the readout module designed for discrete variables in MIMO detection tasks is not applicable to channel estimation. To this end, we redesigned the readout module to accommodate the regression tasks associated with continuous variables in the channel estimation problem.
	\end{remark}
	\vspace{-1em}
	\subsection{The Overall Algorithm Description}
	\begin{algorithm}
		\renewcommand{\algorithmicrequire}{\textbf{Input:}}
		\renewcommand{\algorithmicensure}{\textbf{Output:}}
		\caption{Proposed SBL-GNNs algorithm}
		\begin{algorithmic}[1]
			\Require $\mathbf{y}_m$, $\boldsymbol{\Phi}_m$, $\mathbf{F}_{\mathrm{L}}$, $\mathbf{F}_{\mathrm{D}}$, $\zeta$, $T$, and $L$.
			\Statex \textbf{Initialize:} ${\boldsymbol{\gamma}}_m^{(0)}=\mathbf{1}$ and $\mathbf{g}_{m,i}^{(l-1)}=\mathbf{0}$;
			\For {$t = 1, \cdots, t, \cdots T$}
			\Statex /*\textbf{Observation module}*/
			\State Compute $\boldsymbol{\Sigma}_m^{(t)}$ and $\boldsymbol{\lambda}_m^{(t)}$ in (\ref{update3}) and (\ref{update4});
			\State Compute $\mathbf{V}_m^{(t)} = \zeta_m\boldsymbol{\Phi}_m^{\mathrm{H}}\boldsymbol{\Phi}_m + \mathrm{diag}(\boldsymbol{\gamma}_m^{(t-1)})$;
			\Statex /*\textbf{GNNs module}*/
			\If{$t$=1}
			\State Compute $\mathbf{u}^{(0)}_{m,n}$ in (\ref{u0}), $n\in \mathcal{V}$;
			\EndIf
			\For{$l=1,\cdots,l,\cdots L$}
			\State Compute $\mathbf{c}_m^{n,k}(l)$ in (\ref{message_ij}),  $n, k \in \mathcal{V}$ and $k\neq n$;
			\State Compute $\mathbf{c}_m^{n}(l)$ in (\ref{message_i}),  $n\in \mathcal{V}$;
			\State Compute $\mathbf{g}_{m,n}^{(l)}$ in (\ref{hidden1}), $n\in \mathcal{V}$;
			\State Compute $\mathbf{u}_{m,i}^{(l)}$ in (\ref{hidden2}), $n\in \mathcal{V}$;
			\EndFor
			\State Compute the prior variance ${\gamma}_{m,n}^{(t)}$ in (\ref{prior_variance});
			\EndFor
			\Ensure  $\hat{\mathbf{x}}_m = \boldsymbol{\lambda}_m^{(T)}$.
		\end{algorithmic}
		\label{GNN_SBL_Algorithm}
	\end{algorithm}
	
	The proposed SBL-GNNs algorithm is composed of two modules, i.e., observation module and GNNs module. The computations in the observation module remain
	the same as those in the traditional SBL-based algorithm. The difference between SBL-GNNs and traditional SBL is that the update of prior variance is replaced by GNNs. {The overall SBL-GNNs algorithm is given in Algorithm \ref{GNN_SBL_Algorithm}.}
	In the following, we elaborate on the computational complexity in terms of the number of multiplications of the proposed SBL-GNNs algorithm. In the observation module, we first need to compute $\boldsymbol{\Phi}_m^{\mathrm{H}}\boldsymbol{\Phi}_m$ and $\boldsymbol{\Phi}_m^{\mathrm{H}}\mathbf{y}_m$. 
	The computational complexity is $\mathcal{O}(K^3N^2PN_{\mathrm{RF}})$ and $\mathcal{O}(K^2PN_{\mathrm{RF}}N)$, respectively. Then, the matrix inversion operation in (\ref{update3}) involves the computational complexity of $\mathcal{O}(N^3K^3+N^2K^2+2NK)$. In the GNNs module, the main computational complexity is from the MLP, whose complexity is $\mathcal{O}(LN_{\mathrm{u}}N_{\mathrm{h}_1}N_{\mathrm{h}_2})$. Thus, the over computational complexity is $\mathcal{O}(K^3N^2PN_{\mathrm{RF}}T+K^2PN_{\mathrm{RF}}NT+N^3K^3T+N^2K^2T+2NKT+LN_{\mathrm{u}}N_{\mathrm{h}_1}N_{\mathrm{h}_2}T)$.
	\vspace{-1em}
	\subsection{Training Scheme}
	The proposed SBL-GNNs estimator was implemented in Python using PyTorch. The training process consisted of 50 epochs, with each epoch comprising 320 batches of 64 samples. Each sample included realizations of $\mathbf{x}_m$, $\boldsymbol{\Phi}_m$, $\mathbf{y}m$, and $\mathbf{n}m$ generated according to (\ref{De_obs}). The signal-to-noise ratio (SNR) values were uniformly distributed in the range $[SNR{\mathrm{min}}, SNR{\mathrm{max}}]$. The Adam optimizer was employed for training with a learning rate of $1 \times 10^{-4}$. A mean squared error (MSE) loss function was utilized to update the learnable parameters via the back-propagation algorithm.

	In the testing phase, a dataset comprising 10,000 samples per SNR point was generated for evaluation. The normalized mean squared error (NMSE) was computed for all trained estimators using this testing dataset. It is worth noting that the proposed estimators share the same neural network (NN) parameters across all subarrays. This design ensures that the number of NN parameters to be optimized during training remains independent of the number of subarrays, thereby maintaining scalability.

	\section{Proposed Centralized Bayesian Refinement Algorithm}
	\label{section5}
	In this section, we detail the proposed Bayesian refinement algorithm to effectively solve the denoising problem in (\ref{MAP2}). Specifically, to capture both the block sparsity and the correlations among the angular-delay channel coefficients ${x}_j$, we assign a Markov chain-based hierarchical prior model for $\mathbf{x}$, which is given by
	\begin{equation}
		\begin{aligned}
			p(\mathbf{x}, \boldsymbol{\upsilon}, \mathbf{o}) &=p(\mathbf{x}|\boldsymbol{\upsilon})p(\boldsymbol{\upsilon}|\mathbf{o})p(\mathbf{o})\\
			&=\prod_{j\in\mathcal{J}}p(x_j|\upsilon_j)p(\upsilon_j|o_j)p(\mathbf{o}),
		\end{aligned}
		\label{Prior}
	\end{equation}
	where $p(x_j|\upsilon_j) = \mathcal{CN}(x_j; 0, \upsilon_j^{-1})$, $p(\upsilon_j|o_j) = \delta(1-o_{j})\mathcal{G}a(\upsilon_{j};a,b) + \delta(1+o_{j})\mathcal{G}a(\upsilon_{j};\overline{a},\overline{b})$, and $p(\mathbf{o})$ denotes a Markov chain given by $p(\mathbf{o})=p(o_1)\prod_{j=2}^{N_t}p(o_j|o_{j-1})$, where $p(o_j|o_{j-1})$ is given by
	\begin{equation}
		p(o_j|o_{j-1}) = \left\{
		\begin{aligned}
			(1-p_{01})^{1-o_j}p_{01}^{o_j}&,\quad o_{j-1} = 0, \\
			p_{10}^{1-o_j} (1-p_{10})^{o_j}&,\quad o_{j-1} = 1,\\
		\end{aligned}
		\right.
		\label{trans}
	\end{equation}
	where $p_{01} = p(o_j = 0|o_{j-1} = 1)$ and the other three transition probabilities are defined similarly. 
	
	According to (\ref{Prior}) and (\ref{trans}), the prior of $x_j$ is given by
		\begin{equation}
			p(x_j) = \int\mathcal{CN}(x_j; 0, \upsilon_j^{-1})p(\upsilon_j|o_j)p(\mathbf{o})\mathrm{d}\upsilon_j\mathrm{d}\mathbf{o}, j \in \mathcal{J}.
			\label{p4}
	\end{equation}
	From (\ref{p4}), the Markov chain-based prior model facilitates a flexible and effective representation of sparsity and correlations among the coefficients $x_j$. Specifically, the precision parameters of the variance, $\upsilon_j$, regulate the sparsity of $x_j$, where $\upsilon_j$ approaching zero effectively enforces $x_j$ to be zero. Furthermore, the Markov chain models the variance states, encouraging structured block patterns while suppressing isolated coefficients that deviate from their neighboring values.
	Precisely, the state of each coefficient being non-zero is conditioned on the states of its adjacent coefficients, with transition probabilities $p(o_j|o_{j-1})$ dictating the likelihood of transitions between states. 
	When a coefficient is in the non-zero state, the transition probability of remaining in this state is expected to be higher, thereby promoting continuity and reducing the occurrence of isolated non-zero coefficients.
	
	\begin{figure*}
		\centering
		\includegraphics[width=0.7\textwidth]{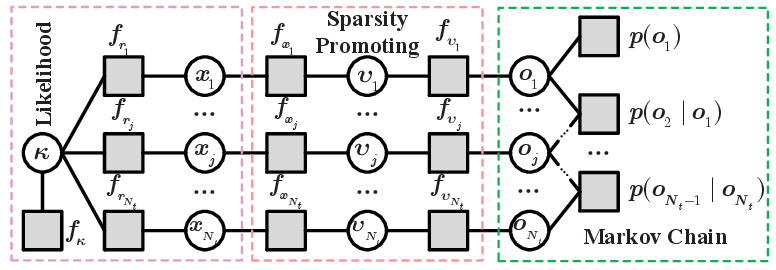}
		\caption{The factor graph representation of (\ref{MAP3}).}
		\label{Centralized}
	\end{figure*}
	 
	With the prior model in (\ref{Prior}), problem (\ref{MAP2}) can be formulated as 
	\begin{equation}
		\begin{aligned}
			\tilde{\mathbf{x}} = \arg \max_{\mathbf{x}} \quad p(\mathbf{r}|\mathbf{x},\kappa)p(\mathbf{x}| \boldsymbol{\upsilon})p(\boldsymbol{\upsilon}|\mathbf{o})p(\mathbf{o})p(\kappa),
		\end{aligned}
		\label{MAP3}
	\end{equation}
	where $p(\mathbf{r}|\mathbf{x}, \kappa)$ denotes the likelihood probability distribution, which is given by
	\begin{equation}
		p(\mathbf{r}|\mathbf{x}, \kappa) \propto \frac{1}{\pi^{^{N_t}}\kappa^{-^{N_t}}} \exp(-\kappa\left|\mathbf{x}-\mathbf{r}\right|^2).
	\end{equation}
	The dependencies of the random variables in the factorization (\ref{MAP3}) can be shown by a factor graph as depicted in Fig.~\ref{Centralized}, where $f_{r_j}\triangleq p(r_j\mid x_j,\kappa)$, $f_{x_j} \triangleq p(x_j \mid \upsilon_j)$, $f_{\upsilon_j} \triangleq p(\upsilon_j \mid o_j)$, and $f_{\kappa} \triangleq p(\kappa)$.
	To achieve a computationally efficient inference of problem (\ref{MAP3}), we propose a variational message passing and belief propagation based refinement algorithm. 
	\begin{remark}
		From (\ref{p4}), it is evident that, unlike \cite{UAMP}, where each $x_{j}$ is assigned an independent variance precision parameter, our approach explicitly captures the correlations among angular-delay channel coefficients by the Markov chain. This fundamental distinction leads to significant differences in the factor graph structure. Specifically, the sparse prior models in \cite{UAMP} are typically formulated for loop-free factor graphs.
		In contrast, our approach introduces loops and higher-order dependencies through the Markov chain, resulting in a more intricate factor graph representation, as illustrated in Fig. \ref{Centralized}. This structural complexity necessitates distinct message update equations at the variable nodes, deviating from conventional message-passing strategies designed for \cite{UAMP}.
	\end{remark}
	
	In the following, we derive the forward (from left to right) and backward (from right to left) message passing in detail.
	\subsubsection{\textbf{Forward Message Passing}}Assume the belief of $\kappa$ is denoted as $b(\kappa)$, which is derived in (\ref{post_beta1}). Utilizing the variational message passing rule, the message from factor node $p(r_j|x_j)$ to $x_j$ can be given by
	\begin{equation}
		\begin{aligned}
			\nu_{f_{r_j}\rightarrow x_j}(x_j) &\propto \exp\left\{\int b(\kappa)\ln p(r_j|x_j,\kappa) \mathrm{d}\kappa\right\}\\ &\propto \mathcal{CN}(x_j; r_j, \tilde{\kappa}^{-1}).
		\end{aligned}
	\end{equation}
	
	Assume that the belief of $x_j$ obeys $\mathcal{CN}(x_j; \tilde{x}_j, \tilde{\tau}_j)$, which is derived in (\ref{post_x_q}). Utilizing the message $\nu_{f_{r_j}\rightarrow x_j}(x_j)$ and variational message passing rules, 
	the message from factor node $p(x_j|v_j)$ to variable node $\upsilon_j$ is given by
	\begin{equation}
		\begin{aligned}
			\nu_{f_{x_j}\rightarrow \upsilon_j}(\upsilon_j) &\propto \exp\left\{\int b(x_j)\ln p(x_j|\upsilon_j) \mathrm{d}x_j\right\} \\
			&\propto \upsilon_j\exp(-\upsilon_j(\left|\tilde{x}_j\right|^2+\tilde{\tau}_j)),
		\end{aligned}
	\end{equation}
	Then, according to the sum-product rules, the message from factor node $p(v_j|o_j)$ to $o_j$ is given by
	\begin{equation}
		\begin{aligned}
			\nu_{f_{\upsilon_j}\rightarrow o_j}(o_j)
			=\pi_j^{\mathrm{out}}\delta(1-o_j) + (1-\pi_j^{\mathrm{out}})\delta(1+o_j),
		\end{aligned}
		\label{fv2o}
	\end{equation}
	where $\pi_j^{\mathrm{out}}$ is given by
	\begin{equation}
		\pi_j^{\mathrm{out}} = \frac{a(\overline{b}+\left|\tilde{x}_j\right|^2+\tilde{\tau}_j)}{a(\overline{b}+\left|\tilde{x}_j\right|^2+\tilde{\tau}_j)+\overline{a}({b}+\left|\tilde{x}_j\right|^2+\tilde{\tau}_j)}.
		\label{pi_out}
	\end{equation}
	\subsubsection{\textbf{Backward Message Passing}}
	According to (\ref{fv2o}), the message from $o_j$ to $p(v_j|o_j)$ is given by ${\nu}_{o_j \rightarrow f_{\upsilon_j}}(o_j) \propto  (1-\pi^{\mathrm{in}}_j) \delta(o_j) + \pi^{\mathrm{in}}_j \delta(1-o_j)$, 
	where $\pi^{\mathrm{in}}_j$ is defined as 
	\begin{equation}
		\pi^{\mathrm{in}}_j = \frac{\psi^f_j \psi^b_j}{\psi^f_j \psi^b_j + (1-\psi^f_j)(1-\psi^b_j)}.
		\label{pi_in}
	\end{equation}
	Here, $\psi^f_j$ and $\psi^b_j$ are the forward and backward messages along the Markov chain, and are respectively defined as\cite{Anzheng2,Centralized_CE_6}
	\begin{equation}
		\label{psi_f} \psi^f_j = \frac{p_{01}(1-\psi^f_{j-1})(1-\pi_{j-1}^{\mathrm{out}})+ p_{11}\lambda^f_{j-1} \pi_{j-1}^{\mathrm{out}}}{(1-\psi^f_{j-1})(1-\pi_{j-1}^{\mathrm{out}})+ \psi^f_{j-1} \pi_{j-1}^{\mathrm{out}}},
	\end{equation}
	\begin{equation}
		\label{psi_b} \psi^b_j = \frac{p_{10}(1-\psi^b_{j+1})(1-\pi_{j+1}^{\mathrm{out}})+p_{11}\psi^b_{j+1}\pi_{j+1}^{\mathrm{out}}}{p_0(1-\psi^b_{j+1})(1-\pi_{j+1}^{\mathrm{out}})+p_1\psi^b_{j+1}\pi_{j+1}^{\mathrm{out}}}.
	\end{equation}
	where $p_0 = p_{10}+p_{00}$, $p_1 = p_{11}+p_{01}$.
	
	Then, the message from $p(\upsilon_j|o_j)$ to $\upsilon_j$ can be derived as
	\begin{equation}
		\begin{aligned}
			\nu_{f_{\upsilon_{j}}\rightarrow \upsilon_{j}}(\upsilon_{j}) &= \sum_{o_j \in \left\{-1,1\right\}} p(\upsilon_{j}|o_j) \nu_{o_j\rightarrow f_{\upsilon_j}}(o_j)\\
			&=\pi^{\mathrm{in}}_j\mathcal{G}a(\gamma; a, b) + (1-\pi^{\mathrm{in}}_j)\mathcal{G}a(\gamma; \overline{a}, \overline{b}).
		\end{aligned}
	\end{equation} 
	With the message $\nu_{f_{x_j}\rightarrow \upsilon_j}(\upsilon_j)$ and $\nu_{f_{\upsilon_j}\rightarrow \upsilon_j}(\upsilon_j)$, the belief of $\upsilon_j$ can be given by
	\begin{equation}
		\begin{aligned}
			b(\upsilon_j) &= \pi^{\mathrm{in}}_j\upsilon_j^a\exp(-\upsilon_j(b+\left|\hat{x}_n\right|^2+\tilde{\tau}_j))\\
			&+(1-\pi^{\mathrm{in}}_j)\upsilon_j^{\overline{a}}\exp(-\upsilon_j(\overline{b}+\left|\tilde{x}_j\right|^2+\tilde{\tau}_j))).
			\label{b_ga}
		\end{aligned}
	\end{equation}
	From (\ref{b_ga}), the approximate posterior mean of $\gamma_j$ is given by
	\begin{equation}
		\begin{aligned}
			\hat{\upsilon}_j = \pi^{\mathrm{in}}_j \frac{a+1}{b+\left|\hat{x}_j\right|^2+\tilde{\tau}_j}
			+(1-\pi^{\mathrm{in}}_j)\frac{\overline{a}+1}{\overline{b}+\left|\tilde{x}_j\right|^2+\tilde{\tau}_j}.
		\end{aligned}
		\label{hat_ga}
	\end{equation} 
	Consequently, the message from $p(x_j|\upsilon_j)$ to $x_j$ is derived as
	\begin{equation}
		\begin{aligned}
			\nu_{f_{x_j} \rightarrow x_j}(x_j) &\propto \exp\left\{\int\ln f_{x_j}(x_j, \upsilon_j) b(\upsilon_j)\mathrm{d} \upsilon_j \right\}\\
			&\propto\mathcal{CN}(x_j; 0, \tilde{\upsilon}_j^{-1}),
		\end{aligned}
		\label{m_fx2x}
	\end{equation}
	which will be utilized to update the approximate posterior distribution.
	
	Combining the messages $\nu_{f_{r_j}\rightarrow x_j}(x_j)$ and $\nu_{f_{x_j} \rightarrow x_j}(x_j)$,the approximate posterior marginal distribution of $x_j$ can be approximated as
	\begin{equation}
		\begin{aligned}
			b(x_j) \propto {\nu_{f_{r_j}\rightarrow x_j}(x_j)\nu_{f_{x_j} \rightarrow x_j}(x_j)}\propto \mathcal{CN}(x_j; \tilde{x}_j, \tilde{\tau}_j),
		\end{aligned}
		\label{post_x_q}
	\end{equation}
	 where the approximate posterior mean and variance of $x_j$ can be given by
	\begin{align}
		\label{hat_x} \tilde{x}_j = \frac{r_j\tilde{\kappa}}{\tilde{\kappa}+\tilde{\upsilon}_j}, \quad  \tilde{\tau}_j = \frac{1}{\tilde{\kappa}+\tilde{\upsilon}_j}.
	\end{align}
	
	Based on the belief $b(x_j)$ and variational message passing rules, the message from $f_{r_j}$ to $\kappa$ is given by
	\begin{equation}
		\begin{aligned}
			\nu_{f_{r_j} \rightarrow \kappa} (\kappa) &= \exp\left\{\int b(x_j)\ln p(r_j|x_j,\kappa) \mathrm{d}x_j\right\}\\ 
			&\propto \kappa \exp\left\{-\kappa \left(\left| r_j -\tilde{x}_j\right|^2 + \tilde{\tau}_j\right)\right\}.
		\end{aligned}
	\end{equation} 
	
	Combining the prior message $p(\kappa)$ and $\nu_{f_{r_j} \rightarrow \kappa} (\kappa)$, the belief of $\kappa$ is given by
	\begin{equation}
		\begin{aligned}
			b(\kappa) 
			\propto \kappa^{N_t-1}\exp\left\{-\kappa \sum_{j=1}^{N_t} \left(\left| r_j - \tilde{x}_j\right|^2 + \tilde{\tau}_j\right)\right\}.
		\end{aligned}
		\label{post_beta1}
	\end{equation}
	It is observed that the belief $b(\kappa)$ obeys the Gamma distribution with shape parameters $N_t$ and $\sum_{j=1}^{N_t}\left(\left| r_j - \tilde{x}_j\right|^2 + \tilde{\tau}_j\right)$. Thus, the approximate posterior mean $\tilde{\kappa}$ is given by
	\begin{equation}
		\tilde{\kappa} = \frac{N_t}{\sum_{j=1}^{N_t}\left(\left| r_j - \tilde{z}_j\right|^2 + \tilde{\tau}_j\right)}.
		\label{post_beta}
	\end{equation} 
	
	\begin{algorithm}
		\renewcommand{\algorithmicrequire}{\textbf{Input:}}
		\renewcommand{\algorithmicensure}{\textbf{Output:}}
		\caption{Proposed Bayesian Refinement Algorithm}
		\begin{algorithmic}[1]
			\Require $\mathbf{r}$.
			\Statex \textbf{Initialize:} $\hat{\tau}^{(0)}=1$, $\hat{\mathbf{x}}=\mathbf{0}$, $\hat{\boldsymbol{\gamma}}=\mathbf{1}, \hat{\kappa} = 1$, and $t=0$. 
			\While{the stopping criterion is not met}
			\State Update $\pi_j^{\mathrm{out}}$ according to (\ref{pi_out});
			\State Update $\psi^f_j$ and $\psi^b_j$ according to (\ref{psi_f}) and (\ref{psi_b});
			\State Update $\pi^{\mathrm{in}}_j$ according to (\ref{pi_in});
			\State Update $\tilde{\upsilon}_j$ according to (\ref{hat_ga});
			\State Update $\tilde{x}_j$ and $\tilde{\tau}_x^j$ according to (\ref{hat_x});
			\State Update $\tilde{\kappa}$ according to (\ref{post_beta});
			\State $t = t+1$.
			\EndWhile
			\Ensure $\tilde{\mathbf{x}}^{t+1}$.
		\end{algorithmic}
		\label{UAMP_SBL_MRF}
	\end{algorithm}
	
	The proposed channel refinement algorithm can be organized in a more succinct form, which is summarized in Algorithm \ref{UAMP_SBL_MRF} can be terminated when it reaches a maximum number of iteration or the difference between the estimates of two consecutive iterations is less than a threshold. 
	Once the global angular-delay channel $\tilde{\mathbf{x}}$ is obtained, the global spatial-frequency channel can be reconstructed as $\tilde{\mathbf{H}} = \mathbf{F}_{\mathrm{A}}^{\mathrm{F}}\tilde{\mathbf{H}}_{\mathrm{T}}\mathbf{F}_{\mathrm{D}}$, where $\tilde{\mathbf{H}}_{\mathrm{T}}$ is obtained by the reshape of $\tilde{\mathbf{x}}$.
	\begin{remark}
		The refinement scheme in \cite{De_MIMO_2} relies solely on angular-delay sparsity, ignoring coefficient correlations. It identifies significant coefficients and suppresses others using soft-threshold window. Additionally, the method depends heavily on precise covariance matrix knowledge, limiting its practical applicability.
		In contrast, this work introduces a Bayesian denoising approach that jointly exploits angular-delay sparsity and coefficient correlations via a Markov chain-based hierarchical prior, enhancing refinement performance.
	\end{remark}
	\section{Simulation Results}
	\label{section6}
	\begin{table}
		\renewcommand\arraystretch{1}
		\centering
		\caption{Simulation Parameters}
		\setlength{\tabcolsep}{1mm}{
			\begin{tabular}{l c}
				\toprule [1pt]
				\makecell[l]{Notations}& Parameters\\ 
				\midrule [0.5pt]
				\makecell[l]{Number of BS antenna $N_{\mathrm{t}}$} &128 \\
				\makecell[l]{Number of RF chains $N_{\mathrm{RF}}$} &1 \\
				\makecell[l]{Number of subarrays $M$} &4 \\
				\makecell[l]{Carrier frequency $f_c$}    &30GHz\\
				\makecell[l]{Number of pilot carriers $K$} &16 \\
				\makecell[l]{System bandwidth $f_s$} &1.6GHz \\
				\makecell[l]{Number of channel path $G$} &4\\
				\makecell[l]{Angle of arrival $\vartheta_l$} &$\mathcal{U}(-\pi/2, \pi/2)$\\
				\makecell[l]{Distance between BS and UE or scatters $r_l$} &[10, 50]m\\
				\makecell[l]{Number of iteration of SBL-GNNs $T$} & 5\\
				\makecell[l]{Round of MP of GNNs $L$} & 3\\
				\makecell[l]{Output message dimension of variable nodes $N_{\mathrm{u}}$} & 8\\
				Number of hidden layers $N_{\mathrm{h}_1}$ and $N_{\mathrm{h}_2}$ & 64, 32\\
				\bottomrule[1pt]
			\end{tabular}
		}
		\label{Parameters}
	\end{table}
	
	In this section, we evaluate the performance of the proposed SBL-GNNs and channel refinement algorithms. In the simulation, the key parameters are provided in Table \ref{Parameters}. In particular, we consider NMSE as performance metrics, which is defined as $\mathrm{NMSE} \triangleq {\lVert \hat{\mathbf{H}}_m (\tilde{\mathbf{H}}) - \mathbf{H}_m (\mathbf{H}) \rVert^2_{\mathrm{F}}}/{\lVert \mathbf{H}_m (\mathbf{H}) \lVert^2_{\mathrm{F}}}$,
	where $\mathbf{H}_m (\mathbf{H}) $ and $\hat{\mathbf{H}}_m (\tilde{\mathbf{H}})$ are the true channel and estimated channel, respectively. In addition, the received SNR is defined as $10\log_{10}\left(\lVert\mathbf{WH} \rVert^2_{\mathrm{F}}/ \lVert\mathbf{N} \rVert^2_{\mathrm{F}}\right)$.
	To evaluate channel estimation performance in LPUs, we compare our proposed SBL-GNNs algorithm with the following benchmarks, which adopt different sparse prior models:
	
	\begin{itemize}
		\item StdSBL \cite{StdSBL}: 
		The standard SBL algorithm, implemented
		within an EM framework.
		The algorithm employs a two-layer Gaussian-Gamma
		hierarchical prior model, where the posterior estimates are updated in the E-step, and the hyperparameters
		of the prior model are updated in the M-step, as shown in (\ref{update1})-(\ref{update5}).
		\item PC-SBL \cite{PC_SBL}: 
		A variant of the StdSBL framework that
		incorporates a pattern-coupled Gaussian prior model to
		exploit the block sparsity inherent in signals.
		\item VSP \cite{Variance_State_1}: 
		A variant of the StdSBL framework that employs a 4-connected MRF-based hierarchical prior model to
		capture the block sparsity of signals.
		\item UAMP-MRF \cite{tang2024spatial}: 
		Unity approximate message passing algorithm with a 4-connect MRF prior to characterize the dependencies among neighboring coefficients.
		\item Centralized. Compared with sparse Bayesian learning  based algorithms, the UAMP-MRF has the lowest computational complexity. Therefore, we consider that the centralized channel estimation is realized by UAMP-SBL algorithm with $M=1$.
	\end{itemize}
	
	\renewcommand{\arraystretch}{1.25}
	\begin{table}
		\centering
		\caption{{Computational Complexity Comparison}}
		\begin{tabular}{|c|c|}
			\hline
			{Algorithms}         & {Computational complexity $\mathcal{O}(\cdot)$} \\ \hline
			{StdSBL}                 &    {$\mathcal{O}(N^3K^3T_1)$}                           \\ \hline
			{PC-SBL}              &  {$\mathcal{O}(N^3K^3T_2)$}                           \\ \hline
			{VSP}              & {$\mathcal{O}(N^3K^3T_3)$}                          \\ \hline
			{UAMP}              & {$\mathcal{O}(K^3N^2PN_{\mathrm{RF}}T_4)$}                  
			\\ \hline
			{Proposed SBL-GNNs}        & {\tabincell{c}{$\mathcal{O}(K^3N^2PN_{\mathrm{RF}}T+K^2PN_{\mathrm{RF}}NT$ \\ $+N^3K^3T+LN_{\mathrm{u}}N_{\mathrm{h}_1}N_{\mathrm{h}_2}T)$}} \\       
			\hline
		\end{tabular}
		\label{Comparsion}
	\end{table}
	
	{The computational complexities are detailed in Table \ref{Comparsion}, where $T_1$, $T_2$, $T_3$, $T_4$, and $T$ denote the number of iterations required for different algorithms, respectively. From Table \ref{Comparsion}, it can be observed that, for each iteration, the StdSBL, PC-SBL, VSP, and the proposed SBL-GNNs algorithms have comparable computational complexities due to the involvement of matrix inversion operations. In contrast, the UAMP algorithm, which only involves matrix product operations, exhibits the lowest computational complexity per iteration.
	Notably, owing to the powerful capability of GNNs in capturing the correlation and sparsity of the channel, the SBL-GNNs algorithm requires significantly fewer iterations to achieve satisfactory performance, i.e., $T \ll T_1, T_2, T_3, T_4$. 
	Consequently, the proposed SBL-GNNs algorithm exhibits a lower overall computational complexity, as will be validated later.}
	
	\begin{figure}
		\centering
		\includegraphics[width=0.4\textwidth]{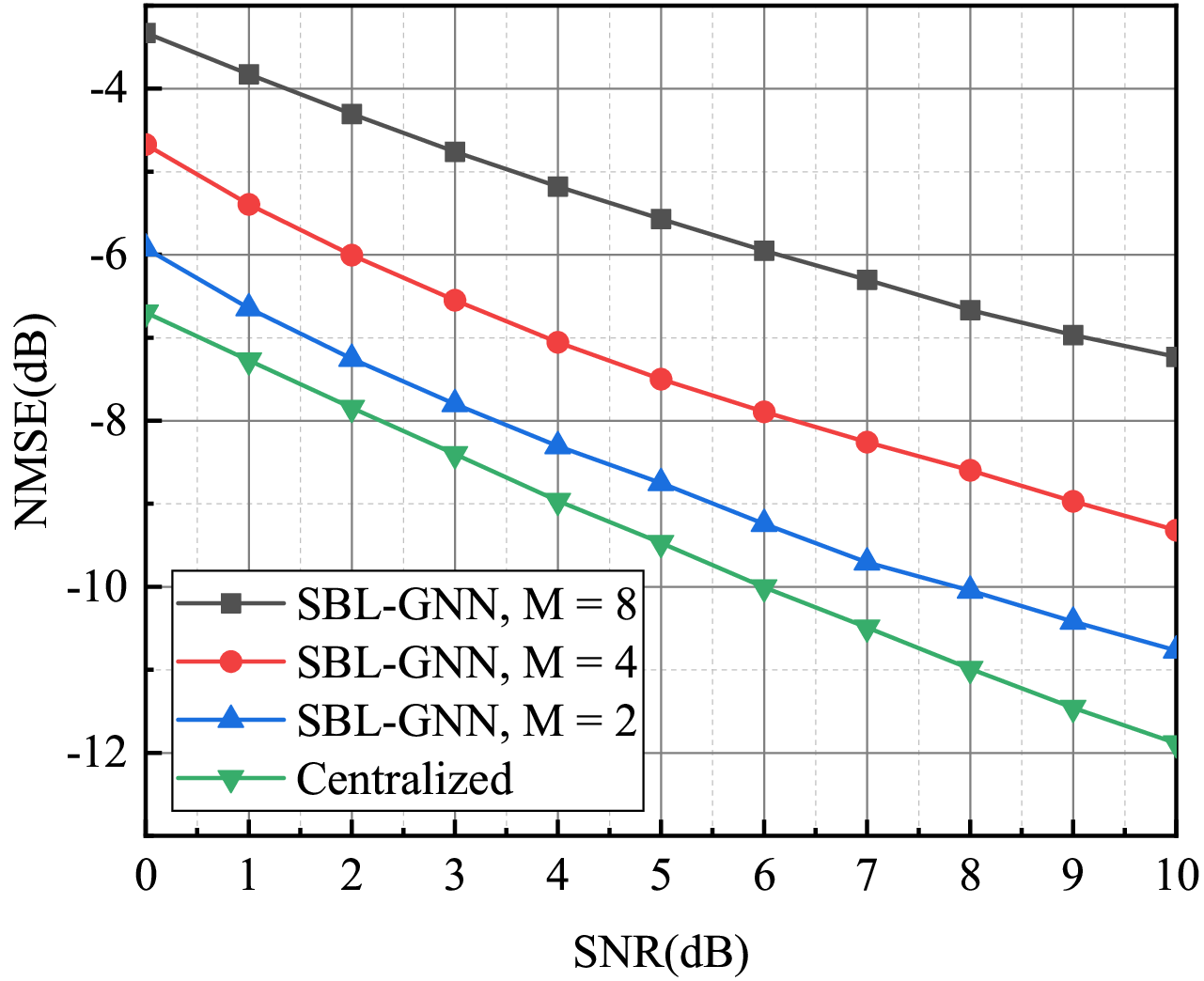}
		\caption{NMSE of SBL-GNNs  versus subarray number.}
		\label{performance_num}
	\end{figure}
	\vspace{-1em}
	\subsection{Parameter Selection and Computational Complexity}
	Fig. \ref{performance_num} illustrates the NMSE performance of the SBL-GNNs algorithm for various subarray configurations, with $P=N/2$. 
	It is evident that the centralized algorithm significantly outperforms the SBL-GNNs algorithm. 
	This disparity arises since the decentralized SBL-GNNs algorithm does not fully exploit the channel correlations among different subarrays. 
	Additionally, the NMSE gap between the centralized algorithm and the SBL-GNNs algorithm widens as the number of subarrays increases. However, it is important to note that the computational complexity of the centralized algorithm is substantially higher than that of the SBL-GNNs algorithm.
	To illustrate this observation more intuitively, Fig.~\ref{Run_time} presents the runtime of different algorithms. For fairness, all algorithms are executed on an i7-9700 CPU.
	Fig. \ref{Run_time1} demonstrates that the runtime of the SBL-GNNs algorithm is shorter than that of traditional algorithms. This improvement is attributed to the powerful capability of GNNs in capturing the structure of channel characteristics. Specifically, the SBL-GNNs algorithm requires only a few iterations (we set $T=5$ in the simulations) to achieve satisfactory performance, whereas traditional algorithms often need hundreds of iterations to ensure convergence. Consequently, this significant reduction in the number of iterations substantially decreases the runtime of the SBL-GNNs algorithm.
	\begin{figure}
		\centering
		\subfigure[Algorithm runtime in LPUs.]{
			\includegraphics[width=0.4\textwidth]{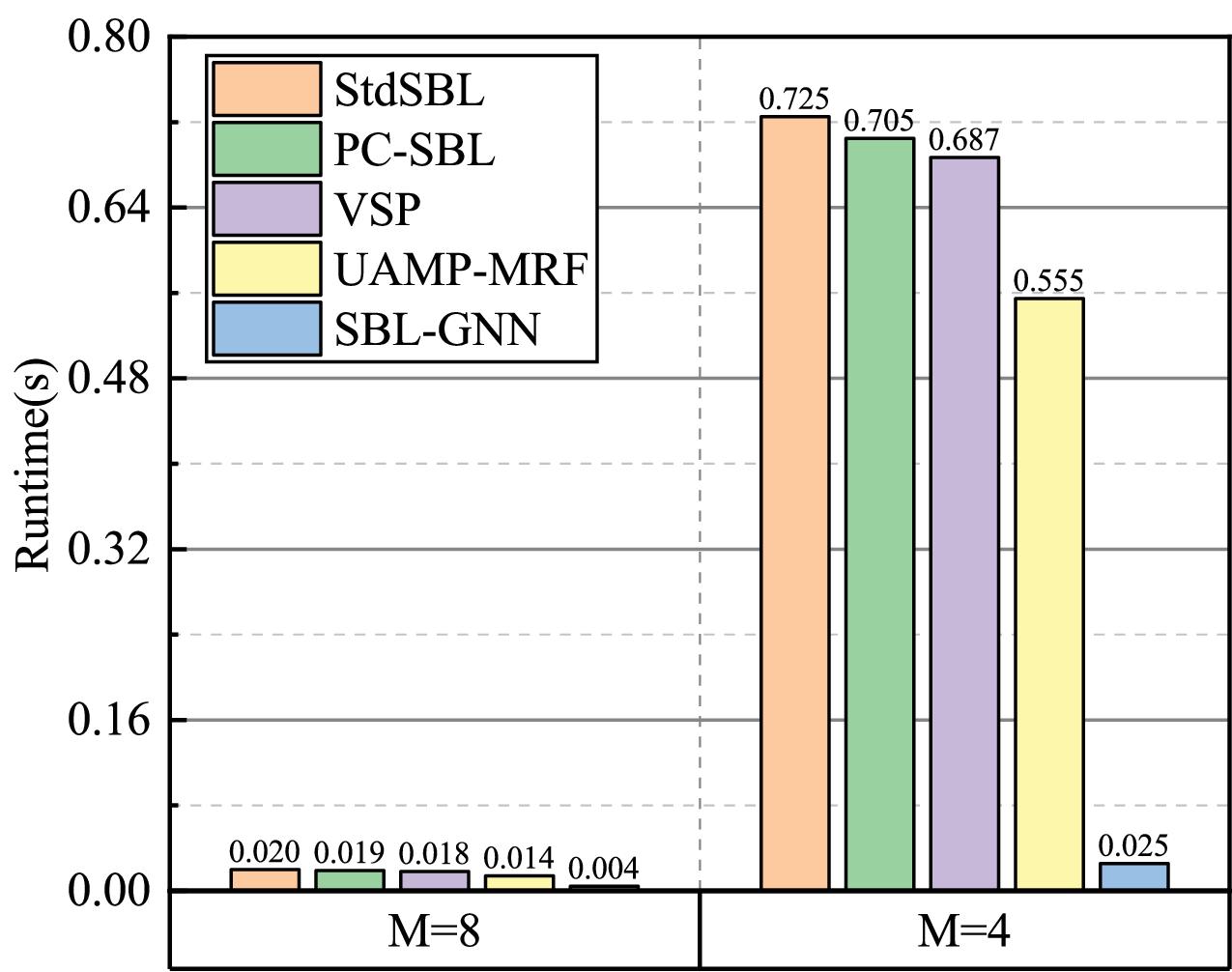}
			\label{Run_time1}
		} 
		\subfigure[Overall algorithm runtime]{
			\includegraphics[width=0.4\textwidth]{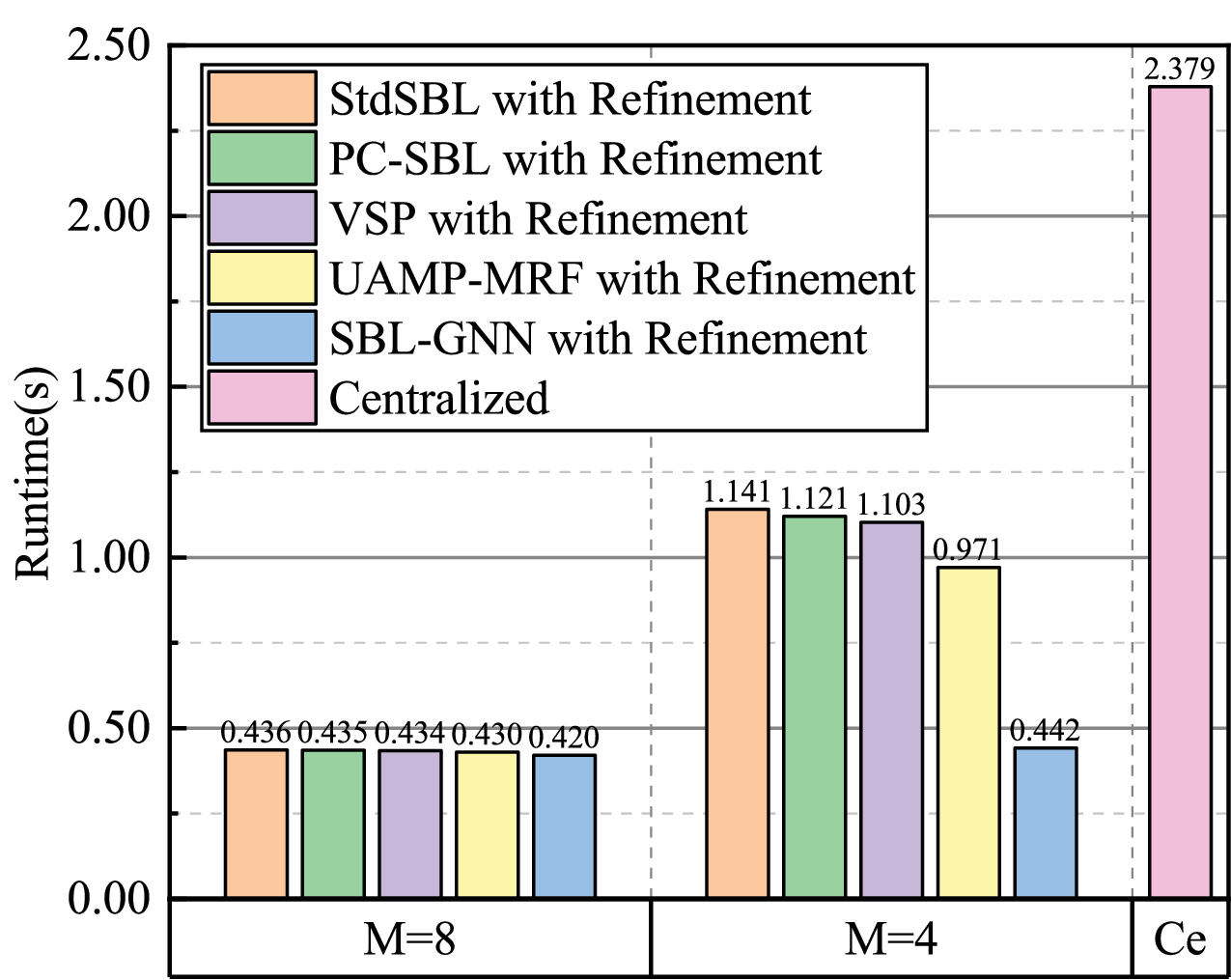}
			\label{Run_time2}
		}
		\caption{Computational complexity evaluation.}
		\label{Run_time}
	\end{figure}
	
	Similarly, Fig. \ref{Run_time2} provides the runtime of the overall algorithm. It can be seen that the runtime of the centralized algorithm is higher than that of all decentralized schemes due to the computational burden of high-dimensional matrix operations. In contrast, the proposed SBL-GNNs with channel refinement exhibits the shortest runtime in all considered scenarios, underscoring the computational efficiency of the proposed estimation scheme.
	Additionally, it is observed that the runtime of the overall SBL-GNNs algorithm with refinement is comparable when $M=8$ and $M=4$. {This is because the total runtime is determined by the combined execution time of Algorithm \ref{GNN_SBL_Algorithm} and Algorithm \ref{UAMP_SBL_MRF}, under the assumption of wired fronthaul transmission and perfect synchronization. Notably, since Algorithm \ref{UAMP_SBL_MRF} exhibits a significantly higher computational cost compared to Algorithm \ref{GNN_SBL_Algorithm}, the overall runtime is predominantly influenced by the Bayesian refinement process. As a result, variations in $M$ have a limited impact on the total runtime.
	Considering both estimation performance and computational complexity, we adopt $M=4$ as a practical choice for subsequent simulations.}
	\vspace{-1em}
	\subsection{Performance Evaluation of the SBL-GNNs Algorithm}
	\begin{figure}
		\centering
		\subfigure[NMSE versus SNR.]{
			\includegraphics[width=0.4\textwidth]{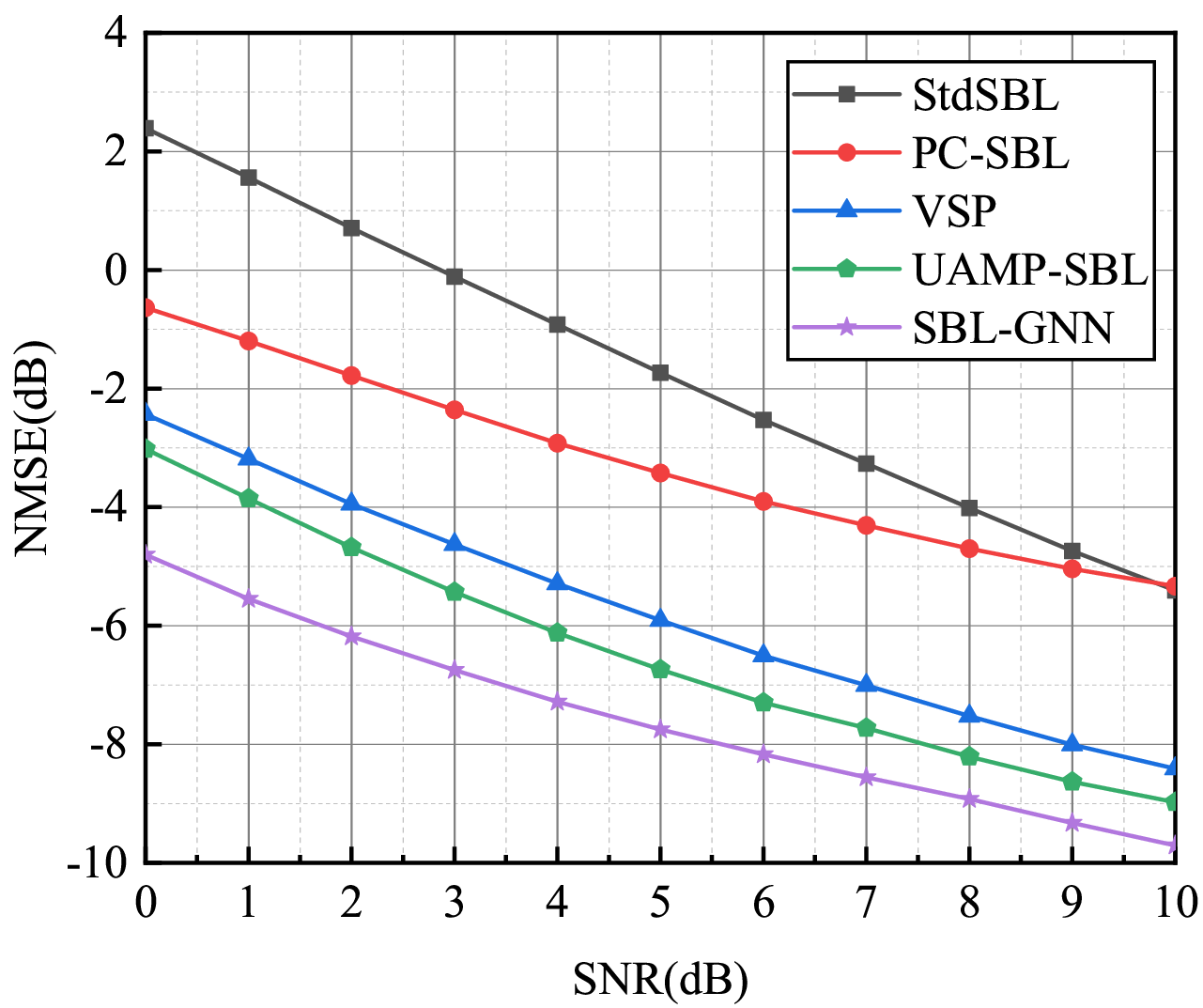}
			\label{SNR1}
		}
		\subfigure[NMSE versus pilot symbols.]{
			\includegraphics[width=0.4\textwidth]{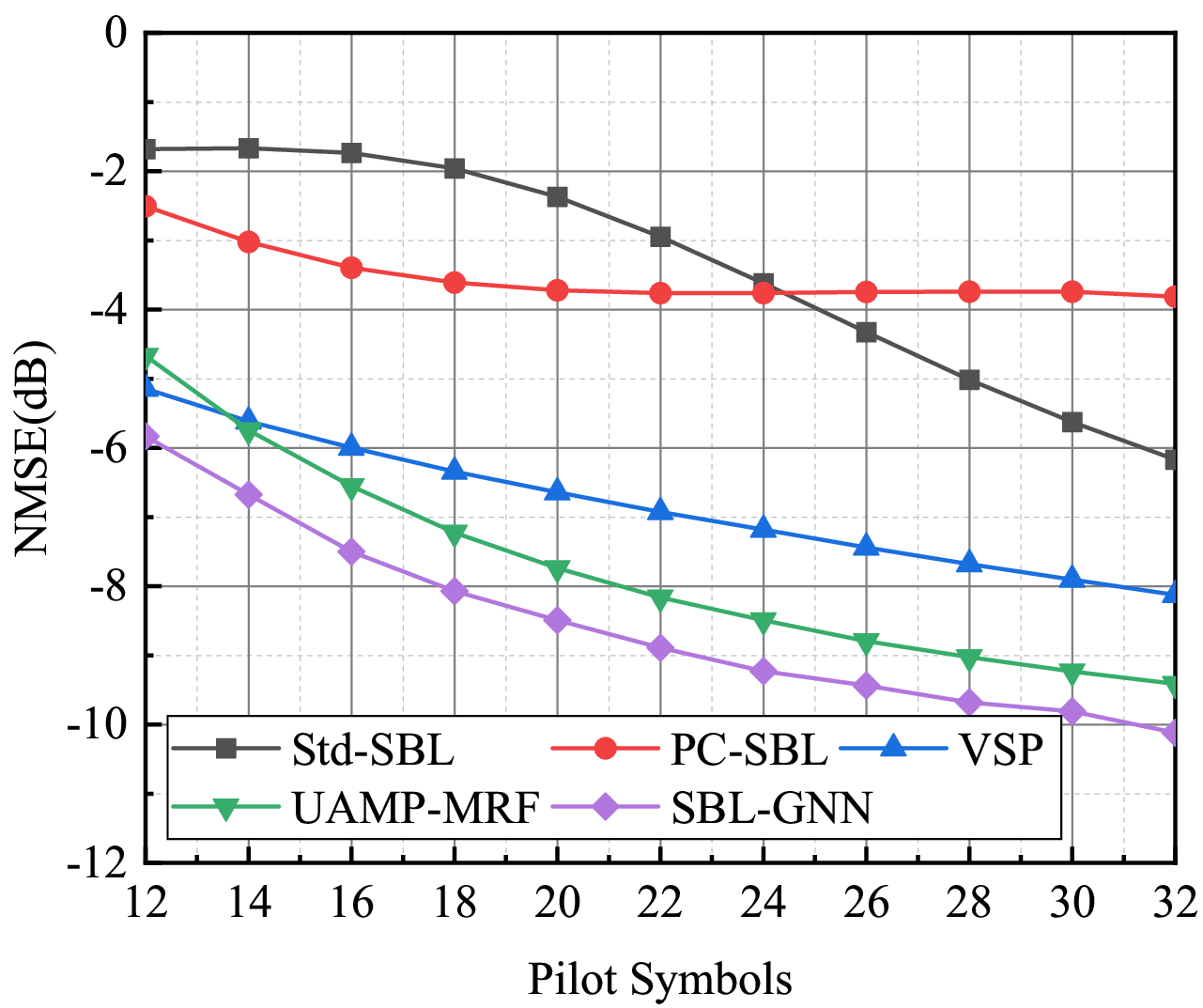}
			\label{Pilot1}
		}
		\caption{Performance evaluation of the SBL-GNNs algorithm.}
		\label{Decentralized}
	\end{figure}
	
	In this subsection, we investigate the NMSE performance of different sparse recovery algorithms executed on the LPUs. Fig. \ref{SNR1} examines the NMSE performance of various algorithms versus SNR with $M=4$, $N=32$, and $P=16$. It is evident that the StdSBL algorithm exhibits the worst estimation performance across all considered SNR ranges. 
	This disparity arises since the StdSBL algorithm employs a two-layer sparse prior, assuming independence among different angular and delay domain coefficients, which prevents the full exploitation of correlations or clustered sparsity among the coefficients. 
	In contrast, algorithms based on clustered-sparse priors significantly outperform the StdSBL algorithm.
	
	Specifically, the PC-SBL algorithm captures the dependencies among channel coefficients using a pattern-coupled hierarchical Gaussian prior. 
	The VSP and UAMP-MRF algorithms employ a 4-connected MRF-based three-layer sparse prior model to capture the structured sparsity in the angular and delay domains. 
	Notably, compared to the pattern-coupled prior, the MRF-based prior can more flexibly capture the clustered sparsity of channels due to the 4-connect structure. Consequently, the VSP and UAMP-MRF algorithms outperform the PC-SBL algorithm across the entire observation range.

	Additionally, it is observed that the proposed SBL-GNNs algorithm demonstrates superior performance compared to all benchmarks within the considered SNR ranges. This superiority arises, from the fully-connected MRF structure, which can directly capture the dependencies between any channel coefficients in the angular and delay domain. On the other hand, the model and data-driven scheme plays a crucial role in the performance improvement. Specifically, the model-based component of this scheme, i.e., the observation module, ensures the interpretability of the SBL-GNNs algorithm. Meanwhile, the GNNs module, with its powerful capability in capturing correlations among nodes and feature extraction, provides more precise prior information for the observation module. Therefore, the proposed SBL-GNNs algorithm outperforms the benchmarks.
	
	Furthermore, the performance comparison of the proposed SBL-GNNs algorithm and the benchmarks at different pilot symbols $P$ is investigated in
	Fig.  \ref{Pilot1} with $SNR=5$dB. The number of pilot symbols $P$ increases from 12 to 32, so the compressive ratio
	$PN_{\mathrm{RF}}/N$ correspondingly increases from 0.375 to 1. The estimation performance of all algorithms improves with the increase in the number of pilot symbols. Notably, the proposed SBL-GNNs algorithm consistently demonstrates superior estimation performance across all considered lengths of pilot symbols. Therefore, the SBL-GNNs algorithm provides a low pilot overhead estimation scheme while improving estimation performance.
	\vspace{-1em} 
	\subsection{Performance Evaluation of the Refinement Algorithm}
	\begin{figure}
		\centering
		\includegraphics[width=0.4\textwidth]{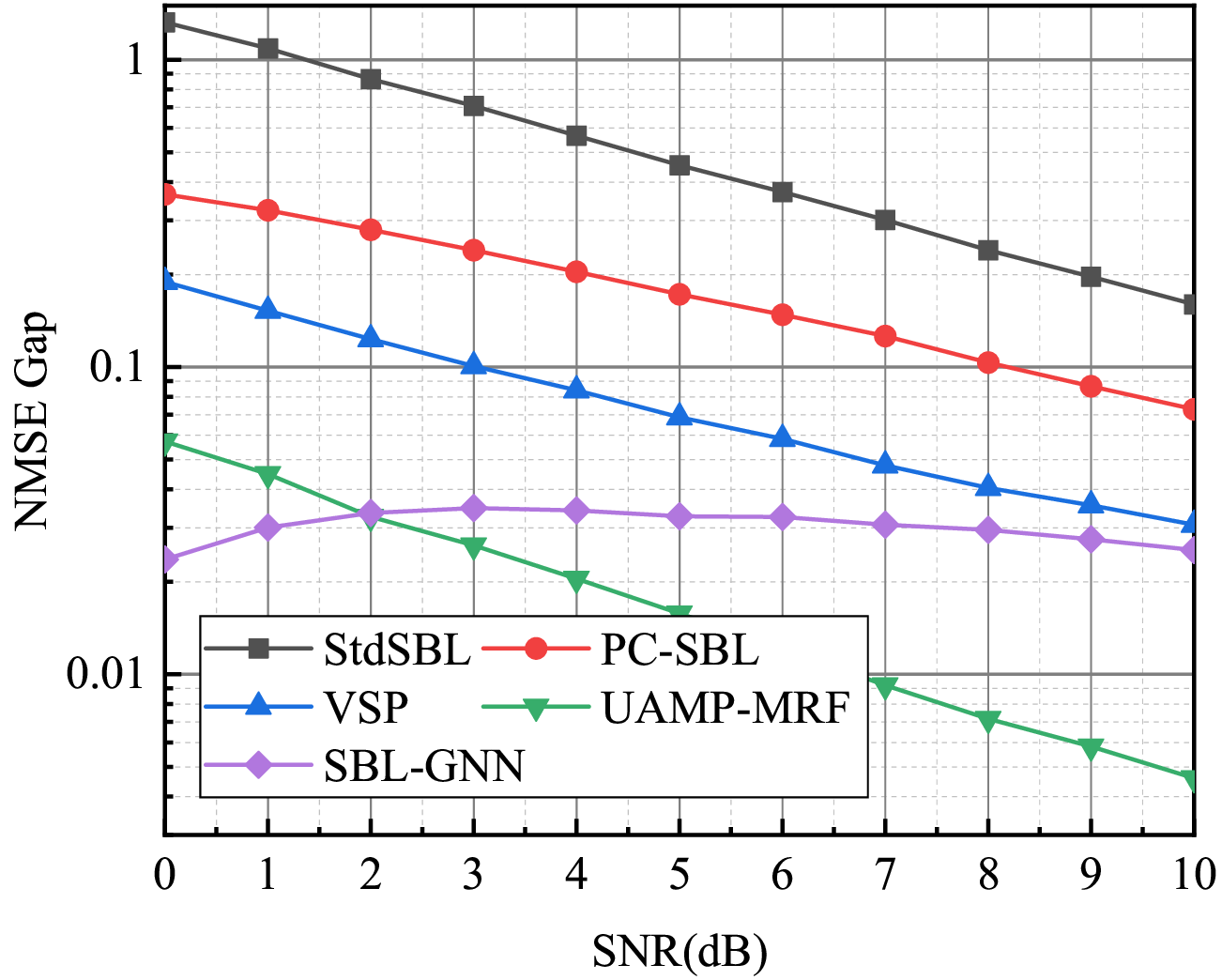}
		\caption{NMSE gap versus SNR.}
		\label{Gap}
	\end{figure}
	
	Then, we evaluate the NMSE performance of the channel refinement on the CPU. Fig. \ref{Gap} examines the NMSE gap, which is defined as $\mathrm{NMSE}^{\mathrm{De}}_{c} - \mathrm{NMSE}^{\mathrm{Re}}_{c}$, where $c\in\left\{\text{StdSBL}, \text{PC-SBL}, \text{VSP}, \text{UAMP-MRF}, \text{SBL-GNNs}\right\}$, $\mathrm{NMSE}^{\mathrm{De}}_{c}$ denotes the estimation performance of the decentralized schemes, and $\mathrm{NMSE}^{\mathrm{Re}}_{c}$ denotes the estimation performance of the decentralized estimation with central refinement. As anticipated, the estimation performance of all algorithms improves after refinement. Specifically, the StdSBL algorithm experiences the most significant performance improvement due to the Markov chain structure fully exploiting the dependencies among angular and delay domain coefficients, which are completely overlooked in the decentralized StdSBL algorithm. Similar improvement is observed in the PC-SBL scheme with channel refinement.
	For the VSP and UAMP-MRF schemes with refinement, the level of performance improvement is less than that of StdSBL and PC-SBL with refinement. This is because the decentralized VSP and UAMP-MRF algorithms already adopt an MRF-based prior to capture the dependencies among angular and delay domain channels within the subarray, resulting in a relatively limited performance gain from central refinement.
	Notably, unlike these benchmarks, the improvement of the decentralized SBL-GNNs algorithm through channel refinement is very consistent across different SNR levels. Therefore, the variation of the NMSE gap with SNR is relatively flat.
	
	\begin{figure}
		\centering
		\subfigure[NMSE versus SNR.]{
			\includegraphics[width=0.4\textwidth]{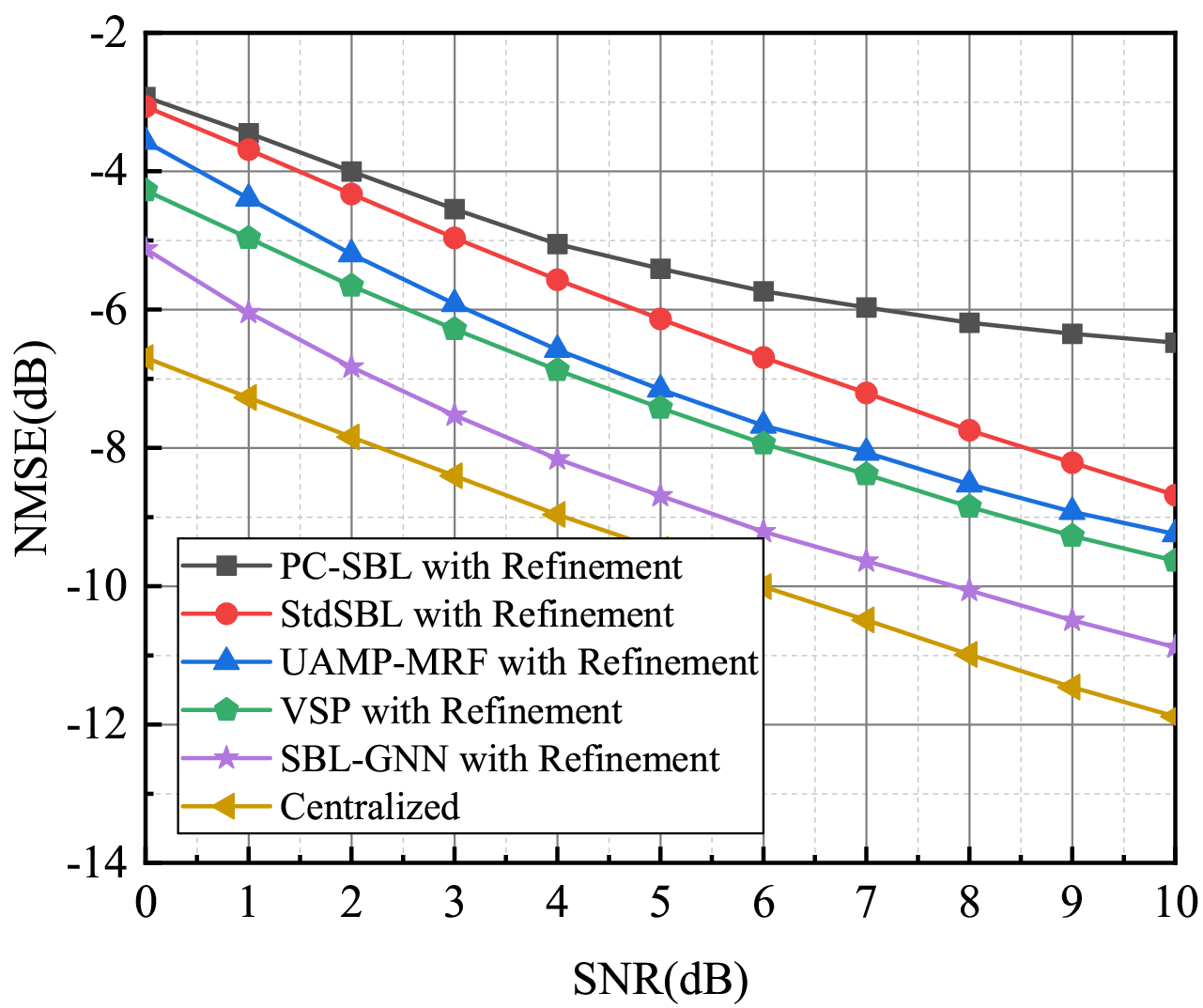}
			\label{SNR2}
		}
		\subfigure[NMSE versus pilot symbols.]{
			\includegraphics[width=0.4\textwidth]{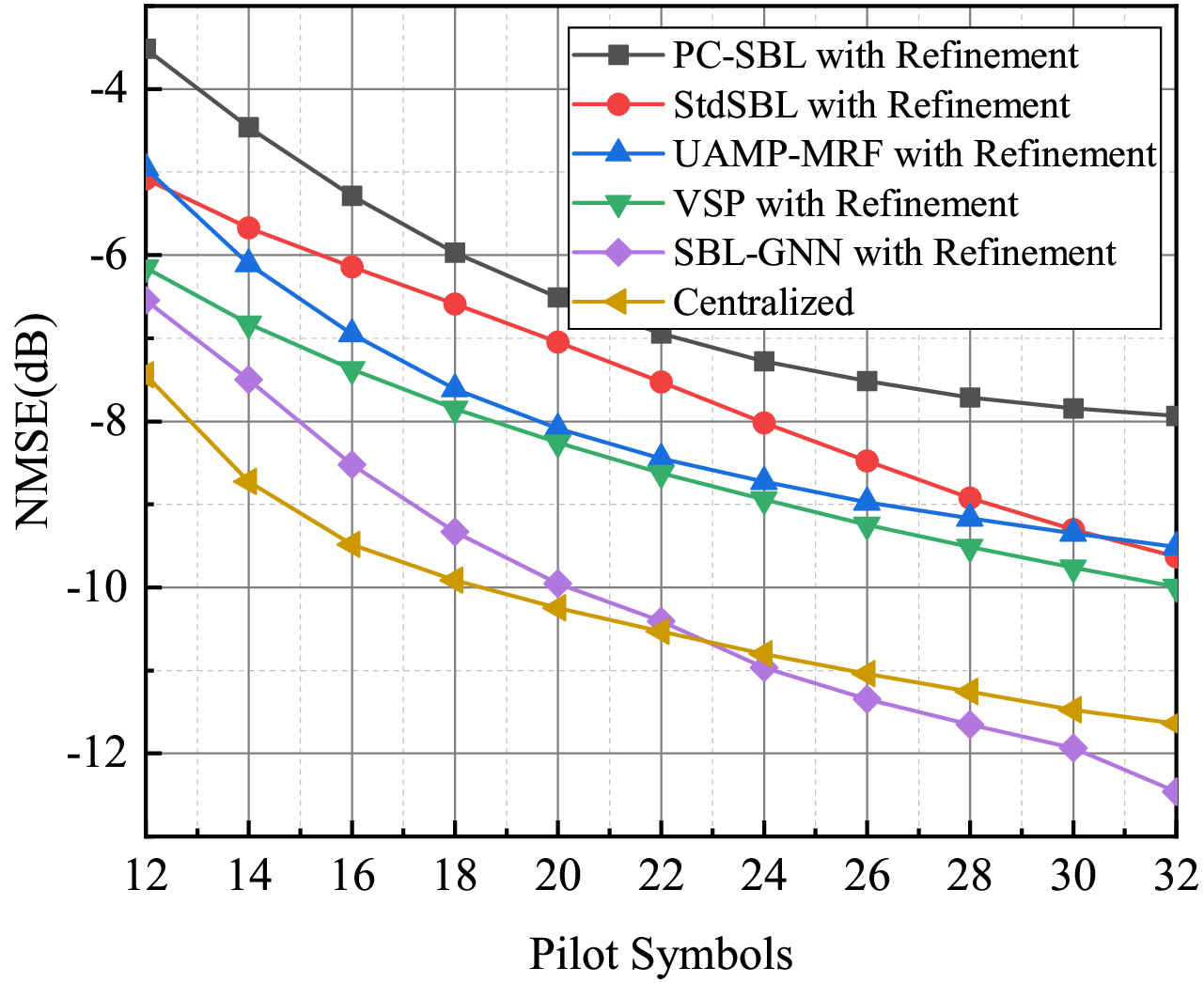}
			\label{Pilot2}
		}
		\label{Centralized_fig}
		\caption{Performance evaluation for overall algorithms.}
	\end{figure}
	
	To further evaluate the performance of the overall algorithm, Fig. \ref{SNR2} plots the NMSE performance of the overall algorithm versus SNR with $P=16$. It can be seen that, owing to the centralized processing, the centralized algorithms achieve the best estimation performance, serving as a lower bound. Meanwhile, the performance of the proposed decentralized SBL-GNNs with refinement is closest to that of the centralized algorithm, with a gap of about 1 dB. In contrast, the performance of the best algorithm in the benchmark is approximately 2 dB less than that of the centralized algorithm.
	A similar superiority is observed in Fig. \ref{Pilot2}, where, when the number of pilot symbols exceeds 25, the performance of the proposed algorithm even surpasses that of the centralized algorithm. Therefore, the proposed estimation algorithm provides a low-overhead estimation scheme while ensuring robust estimation performance.
	\setcounter{equation}{0}
	\renewcommand\theequation{A.\arabic{equation}}
	\begin{figure*}
		\normalsize
		\setcounter{MYtempeqncnt}{\value{equation}}
		\setcounter{equation}{1}
		\begin{align}
			\label{likehood} p(\boldsymbol{\mu}_m^{(t)}| \mathbf{x}_m)  &\propto \exp\left\{-\left(\mathbf{x}_m-\boldsymbol{\mu}_m^{(t)}\right)^{\mathrm{H}}\left(\boldsymbol{\Sigma}_m^{(t)}\right)^{-1}\left(\mathbf{x}_m-\boldsymbol{\mu}_m^{(t)}\right)\right\}\\
			\notag &=\prod_{n \in \mathcal{V}}\exp\left\{2\mathrm{Re}\left((\boldsymbol{\mu}_m^{(t)})^{\mathrm{H}}\mathbf{v}_{m,n}^{(t)}{x}_{m,n}\right)-v^{n,n}_m(t)\lvert x_{m,n} \rvert^2\right\}
			\prod_{n \in \mathcal{V}}\prod_{k\neq n}\exp\left\{-v^{n,k}_m(t) x_{m,n}x_{m,k}^*\right\}\\
			\notag &=\prod_{n \in \mathcal{V}}\psi_{\mathrm{like}}\left(x_{m,n}\right)\prod_{n \in \mathcal{V}}\prod_{k\neq n}\phi_{\mathrm{like}}\left(x_{m,n}, x_{m,k}\right),\\
			\label{prior2} p\left(\mathbf{x}_m; \boldsymbol{\gamma}_m, \mathbf{s}_m\right) &= \prod_{n \in \mathcal{V}}p\left(x_{m,n}|\gamma_{m,n}\right)p\left(\gamma_{m,n}|s_{m,n}\right) \psi\left(s_{m,n}\right)\prod_{n \in \mathcal{V}}\prod_{k\neq n}\phi\left(s_{m,n}, s_{m,k}\right)\\
			\notag &=\prod_{n \in \mathcal{V}}\psi_{\mathrm{pri}}\left(\alpha_m, x_{m,n}\right)\prod_{n \in \mathcal{V}}\prod_{k\neq n}\phi_{\mathrm{pri}}\left(\beta_m, x_{m,n}, x_{m,k}\right),
		\end{align}
		\hrulefill
		\vspace*{4pt}
	\end{figure*}
	\vspace{-1em}
	\section{Conclusion}
	\label{section7}
	In this paper, we have addressed the channel estimation problem for XL-MIMO systems with a hybrid analog-digital architecture, implemented within a DBP framework and a star topology. To overcome the limitations of existing centralized and fully decentralized estimation schemes, we have proposed a novel two-stage channel estimation scheme integrating local sparse reconstruction and global fusion with refinement. For local reconstruction, we have introduced an efficient SBL approach enhanced by GNNs, where E-step of the traditional SBL algorithm is retained, and M-step is updated with GNNs to effectively capture dependencies among channel coefficients. For the global refinement, we have proposed a Markov chain-based hierarchical prior model and a corresponding variational message passing algorithm to exploit the sparsity and correlations in the channel, improving estimation accuracy and efficiency. 
	Simulation results validate the effectiveness of the proposed algorithms, demonstrating that the SBL-GNNs algorithm achieves low complexity with satisfactory performance. Additionally, the overall computational complexity of the two-stage scheme is significantly lower than that of the centralized approach, while maintaining comparable estimation performance.
	\vspace{-1em}
	\section*{Appendix A}
	According to the equivalent AWGN model in (\ref{AWGN2}), $p(\mathbf{x}_m,\boldsymbol{\gamma}_m,\mathbf{s}_m|\boldsymbol{\mu}_m^{(t)})$ can be given by
	\begin{equation}
		p(\mathbf{x}_m,\boldsymbol{\gamma}_m,\mathbf{s}_m|\boldsymbol{\mu}_m^{(t)})
		\propto p(\boldsymbol{\mu}_m^{(t)}|\mathbf{x}_m)p\left(\mathbf{x}_m; \boldsymbol{\gamma}_m,\mathbf{s}_m\right),
		\label{posterior}
	\end{equation}
    where $p(\boldsymbol{\mu}_m^{(t)}|\mathbf{x}_m)$ and $p\left(\mathbf{x}_m; \boldsymbol{\gamma}_m,\mathbf{s}_m\right)$ denote the likelihood and prior distributions, respectively. In the following, we first factorize the likelihood and prior distributions. Then, we further consider the factorization of the posterior distribution. 
    In the terms of $p(\boldsymbol{\mu}_m^{(t)}|\mathbf{x}_m)$, we have (\ref{likehood}),
    where $\psi_{\mathrm{like}}(x_{m,n}) \triangleq \exp\{2\mathrm{Re}((\boldsymbol{\lambda}_m^{(t)})^{\mathrm{H}}\mathbf{v}_{m,n}^{(t)}{x}_{m,n})-v^{n,n}_m(t)\lvert x_{m,n} \rvert^2\}$ denotes the self-potential of $x_{m,n}$; $\phi_{\mathrm{like}}(x_{m,n}, x_{m,k}) \triangleq \exp\{-v^{n,k}_m(t) x_{m,n}x_{m,k}^{*}\}$ denotes the pair-potential between $x_{m,n}$ and $x_{m,k}$; $\mathbf{v}_{m,n}^{(t)}$ and $v^{n,k}_m(t)$ indicate the $n$-th column and $(n,k)$-th entry of $\mathbf{V}_m^{(t)} = (\boldsymbol{\Sigma}_m^{(t)})^{-1}$, respectively.
    
    {As for $p(\mathbf{x}_m; \boldsymbol{\gamma}_m,\mathbf{s}_m)$, according to (\ref{prior})-(\ref{3rd_layer}), we have (\ref{prior2}),
    where $\psi_{\mathrm{pri}}(\alpha_m, x_{m,n})$ and $\phi_{\mathrm{pri}}(\beta_m, x_{m,n}, x_{m,k})$ implicitly  denote pair-potential and self-potential functions for the prior probability distribution $p(\mathbf{x}_m; \boldsymbol{\gamma}_m,\mathbf{s}_m)$ with $\alpha_m$ and $\beta_m$ indicating the parameters of the MRF.
    Combining (\ref{likehood}) and (\ref{prior2}), (\ref{MRF}) can be obtained.}
	\bibliographystyle{IEEEtran}
	\bibliography{ref}
\end{document}